\pgfplotsset{compat=1.9}
\newsavebox{\measure@tikzpicture}
  \def\tikz@width{#1}%
  \def\tikzscale{1}\begin{lrbox}{\measure@tikzpicture}%
  \edef\tikzscale{\pgfmathresult}%
\DeclareMathOperator*{\argmin}{arg\,min}
\newtheorem{definition}{Definition}
\newtheorem{lemma}{Lemma}
\newtheorem{corollary}{Corollary}
\newtheorem*{theorem}{Theorem}
\begin{document}

\pgfmathdeclarefunction{gauss}{2}{%
        \pgfmathparse{1/(#2*sqrt(2*pi))*exp(-((x-#1)^2)/(2*#2^2))}%
    }

\author*[1]{Georgios Kaissis}
\author[2]{Moritz Knolle}
\author[3]{Friederike Jungmann}
\author[4]{Alexander Ziller}
\author[5]{Dmitrii Usynin}
\author[6]{Daniel Rueckert}

\affil[1]{Technical University of Munich, Imperial College London, g.kaissis@tum.de}
\affil[2]{Technical University of Munich, Imperial College London, moritz.knolle@tum.de}
\affil[3]{Technical University of Munich, friederike.jungmann@tum.de}
\affil[4]{Technical University of Munich, alex.ziller@tum.de}
\affil[5]{Technical University of Munich, Imperial College London, dmitrii.usynin@tum.de}
\affil[6]{Technical University of Munich, Imperial College London, 
daniel.rueckert@tum.de \newline The following authors contributed equally: Georgios Kaissis, Moritz Knolle, Friederike Jungmann}

\title{\huge A unified interpretation of the Gaussian mechanism for differential privacy through the sensitivity index}

\runningtitle{$\psi$-DP}


\begin{abstract}
{The Gaussian mechanism (GM) represents a universally employed tool for achieving differential privacy (DP), and a large body of work has been devoted to its analysis. We argue that the three prevailing interpretations of the GM, namely $(\varepsilon, \delta)$-DP, \textit{f-DP} and \textit{Rényi DP} can be expressed by using a single parameter $\psi$, which we term the \textit{sensitivity index}. $\psi$ uniquely characterises the GM and its properties by encapsulating its two fundamental quantities: the sensitivity of the query and the magnitude of the noise perturbation. With strong links to the ROC curve and the hypothesis-testing interpretation of DP, $\psi$ offers the practitioner a powerful method for interpreting, comparing and communicating the privacy guarantees of Gaussian mechanisms.}
\end{abstract}

\keywords{Differential Privacy, Gaussian Mechanism}
\journalname{Submitted to PoPETs 02/2022}
\DOI{Editor to enter DOI}
\startpage{1}
\received{..}
\revised{..}
\accepted{..}
\journalyear{..}
\journalvolume{..}
\journalissue{..}

\maketitle

\section{Introduction}
Differential Privacy (DP) is the \textit{gold standard} technique for providing quantifiable privacy guarantees to individuals whose sensitive data is subjected to algorithmic processing \cite{dpbook}. Since its inception, its applicability to real-world machine learning and statistics tasks has continuously increased, as witnessed by its utilisation in the US census \cite{uscensus2018,dwork2019uscensus} and recent large-scale deployments in industry \cite{appleDP, erlingsson2014rappor}. DP is typically realised through noise perturbation of query outputs over sensitive databases.\par
The Gaussian mechanism (GM) is the prototypical mechanism for obtaining $(\varepsilon, \delta)$-DP, especially in the setting of high-dimensional queries. A large body of research has been devoted to better characterising the GM and has proposed relaxations with desirable properties, such as facilitated composition \cite{mironov2017renyi}. However, to practitioners without expert-level theoretical knowledge, translating the concepts presented in these works to tangible, real-world applications is arguably difficult, which may ultimately hinder DP's broad deployment and democratisation. Moreover, many predominantly theoretical works are also highly topical and can thus be hard to contextualise within the broader field of DP. This bears the risk of incorrectly choosing, constructing or applying DP mechanisms \cite{lyu2016understanding}, potentially resulting in the unintended disclosure of sensitive information, such as medical or financial data. Another layer of complexity emerges upon deployment of DP tools to user-facing products: For example, \textit{Apple}'s use of DP was criticised as insufficient due to its resetting of users' privacy budgets on a frequent basis and poor communication \cite{applePrivacyMacOSTang}, further aggravated by the high complexity of the subject matter.\par
Promoting trust and acceptance of DP and reinforcing its meaningful application will thus be contingent on a clear understanding of the underlying mechanisms, such as the GM. This will not only facilitate clear communication to users and developers, but also help bridge the gap between foundational DP research and its translation into task-specific tools, e.g. software libraries. Here we present a unifying perspective on the GM which encapsulates its fundamental properties into an interpretable quantity. Our contributions towards achieving the above-mentioned goal are as follows:

\begin{itemize}
    \item We demonstrate that the comprehensive characterisation of the GM requires knowledge of only a single parameter, termed the \textit{sensitivity index} $\psi$, which captures the two fundamental properties of the mechanism, namely the sensitivity of the query function and the magnitude of the noise perturbation
    \item We prove that $\psi$ fundamentally links the $(\varepsilon, \delta)$-interpretation, the hypothesis-testing interpretation of \textit{f-DP} and the divergence-based interpretation of \textit{Rényi DP} and can be used to translate between them.
    \item Due to its link to the \textit{receiver-operator characteristic} (ROC) curve, a universally applied tool in machine learning and statistics, the sensitivity index can be easily understood and communicated by and to practitioners
    \item $\psi$ leads to a intuitive geometric interpretation of the GM, as it can be used to express privacy loss in terms of the area under the ROC curve (AUC). This facilitates comparisons among mechanisms. 
    \item Finally, we theoretically and empirically demonstrate the optimal conversion strategy between $\psi$-DP and $(\varepsilon, \delta)$-DP
\end{itemize}

\subsection{Prior Work}
DP and the GM were originally introduced by Dwork et al. \cite{dpbook}. In its original form, the GM was constrained to the \textit{high privacy} regime, that is, values of $\varepsilon \in (0,1)$. This constraint was lifted in the later work by Balle and Wang \cite{balle2018improving}, who extended the analysis of the GM to arbitrary $\varepsilon$ values by employing the cumulative distribution function of the normal distribution instead of the previously used tail bounds. \textit{Rényi DP} \cite{mironov2017renyi} was proposed as a natural relaxation of DP, based on the divergence of the same name, by Mironov. It provides favourable properties under composition, however suffers from a lossy conversion to $(\varepsilon, \delta)$-DP, to which improvements have only recently been proposed \cite{balle2020hypothesis, zhu2021optimal}.\textit{f-DP} was proposed by Dong et al. \cite{dong2019gaussian} and is conceptually the closest to our work in that it expresses the properties of the DP mechanism in terms of a \textit{trade-off function}, which is similar to the ROC curve used by us. We moreover rely on the \textit{hypothesis testing interpretation} of DP and on results due to Wasserman et al \cite{wasserman2010statistical} and Kairouz et al. \cite{privacy_regions}, who introduce the notion of a \textit{privacy region}, which is conceptually related to the AUC utilised in our work. The term \textit{sensitivity index} is folklore and was first utilised as early as 1965 \cite{dempster1965expected}. It (and the synonymous term \textit{discriminability index}) is nowadays typically used to express the ratio between the mean separation and the standard deviation of two equivariant Gaussian distributions \cite{das2021method}, which motivates its utilisation in our work. We are not aware of any prior use of this term in DP literature.  

\section{Preliminaries}
We begin by introducing key terminology used in the remainder of the work. We will consider the case in which a trusted curator in possession of a sensitive database wants to employ the GM to privatise the outputs of some function applied to the database in order to privately publish the results while being able to offer the individuals whose data is contained in the database a quantifiable privacy guarantee.\par
We will refer to the sensitive database (synonymously, dataset, to designate that individuals are present in the database only once) as $D$. Its adjacent database, designated as $D'$ can be constructed from $D$ by adding or removing a single individual's data, and we will use the symbol $\wr$ to denote adjacency, i.e. that the Hamming distance between the databases equals $1$.\par
We assume that a function $f$ is applied to the database, and we denote $f$ as a \textit{query (function)}. Examples of such queries are counting how many individuals have a certain attribute. Moreover, the execution of a training step of e.g. a neural network resulting in the publication of a gradient update is also considered a query.\par
We will use a similar formalism as Laud et al. \cite{laud2020framework} to describe the global sensitivity of $f$. We consider the set $X$ of all possible databases a (\textit{Banach}) space equipped with a metric $d_X$ which expresses the distance between them. The query function $f$ then maps an element of $X$ to an element of an output space $Y$. We can now define \textit{global sensitivity} $\Delta$  of $f$ as follows:

\begin{definition}[Global sensitivity $\Delta$ of $f$]
The global sensitivity $\Delta$ of $f$ is:

\begin{equation}
    \Delta(f) = \max _{D, D' \in X} \frac{d_{Y}\left(f(D), f\left(D'\right)\right)}{d_{X}\left(D, D'\right)}
\end{equation}
When $D$ and $D'$ are adjacent as defined above and $Y$ is equipped with the $L_2$ norm, we can write:
\begin{equation}
    \Delta(f)_2 = \max_{D \wr D'} \Vert f(D) - f(D') \Vert_2
    \label{l2_sens}
\end{equation}
\end{definition}
Equation (\ref{l2_sens}) describes the global $L_2$-sensitivity of $f$, which forms the basis of our discussion. We will hence omit the subscript and refer to the global $L_2$-sensitivity as only $\Delta$.   

A DP \textit{mechanism} can be considered a higher-level function which takes as an input a database, the query function, its sensitivity and one or more hyperparameters which control how much noise to apply to the output of the query before publishing the noised result. We will limit the scope of our discussion to the \textit{Gaussian} mechanism on real-valued queries:

\begin{definition}[\textit{Gaussian} mechanism]
The Gaussian mechanism $\mathcal{M}$ on the query function $f: X \rightarrow \mathbb{R}^d$ with sensitivity $\Delta$ applied over a database $D \in X$ outputs:

\begin{equation}
    \mathcal{M}(f(D)) = f(D) + \xi, \; \xi \sim \mathcal{N}(0, \sigma^2 \, \mathbf{I}_d)
    \label{gaussian_mechanism}
\end{equation}
where $\sigma$ denotes the standard deviation of the normal distribution $\mathcal{N}$ and is calibrated to the sensitivity $\Delta$. $\mathbf{I}_d$ denotes the identity matrix with $d$ diagonal elements.
\label{def_gaussian_mech}
\end{definition}

The GM can be used to render the publication of $f$'s output differentially private with respect to the individuals in the database. 

\begin{definition}[$(\varepsilon, \delta)$-DP]
We say that the randomised mechanism $\mathcal{M}$ preserves $(\epsilon, \delta)$-DP if, for all pairs of adjacent databases $D$ and $D'$ and all subsets $\mathcal{S}$ of $\mathcal{M}$'s range:
\begin{equation}
    \mathbb{P}(\mathcal{M}(f(D) \in \mathcal{S})) \leq e^{\varepsilon} \, \mathbb{P}(\mathcal{M}(f(D') \in \mathcal{S})) + \delta
\end{equation}
We note that the definition is symmetric.
\label{eps_delta_dp_def}
\end{definition}

From the above, it becomes apparent that the publication of $\mathcal{M}$'s outputs results in \textit{privacy loss}, which affects the individuals whose records are contained in $f$'s inputs. The exact quantification of this privacy loss (for example using the parameters $(\varepsilon, \delta)$) and constraining it using specified mechanisms is central to the study of DP. \par
As we will describe in detail below, a single $(\varepsilon, \delta)$-tuple is insufficient to comprehensively describe the the privacy loss attributes of the GM. We therefore introduce two additional notions of DP aiming to better capture its properties:   

\begin{definition}[\textit{Rényi}-DP]
$M$ preserves $(\alpha, \rho)$-Rényi-DP (RDP) \cite{mironov2017renyi} if, for all pairs of adjacent databases $D$ and $D'$:
\begin{equation}
    D_{\alpha} \left( \mathcal{M}(f(D)) \parallel \mathcal{M}(f(D')) \right) \leq \rho
\end{equation}
where $D_{\alpha}$ denotes the \textit{Rényi} divergence of order $\alpha > 1$. At  $\alpha=1$, defined by continuity, this corresponds to bounding the Kullback-Leibler-divergence. Of note, $D_{\infty}$-RDP is equivalent to $(\varepsilon, 0)$-DP. The definition is asymmetric.

\end{definition}

Beyond RDP and the broader class of \textit{divergence-based} DP definitions, DP interpretations relying on the techniques of \textit{statistical hypothesis testing} have also been introduced, most notably, \textit{f-DP}:   

\begin{definition}[\textit{f-DP}]
$\mathcal{M}$ preserves f-DP if, for all pairs of adjacent databases $D$ and $D'$:
\begin{equation}
    T(\mathcal{M}(f(D)), \mathcal{M}(f(D'))) \geq f
\end{equation}
where $T$ denotes a trade-off function. 
\end{definition}

We note that, whereas RDP conveys the properties of the GM as the value of a divergence function from the outputs of $\mathcal{M}$ applied to $f(D)$ to the outputs of $\mathcal{M}$ applied to $f(D')$, \textit{f-DP} utilises the notion of a \textit{trade-off} between the \textit{Type I} and \textit{Type II} statistical errors the adversary is facing when trying to distinguish between the aforementioned outputs. Expressing the similarities between these two approaches and the $(\varepsilon, \delta)$-interpretation through the sensitivity index $\psi$ represents the core of our work. We therefore conclude this section by defining the sensitivity index.

\begin{definition}[Sensitivity index $\psi$]
The sensitivity index $\psi$ is defined as: 
\begin{equation}
    \psi = \frac{\Delta}{\sigma}
\end{equation}
\label{psi_def}
\end{definition}

As is apparent from equation (\ref{gaussian_mechanism}) and definition (\ref{def_gaussian_mech}), $\psi$ encapsulates the two parameters determining the behaviour of the GM. As the GM relies on a random noise draw and its behaviour is otherwise independent of the query function and the input database, one may state that $\psi$ is a \textit{singular} parameter of the GM.

\section{Bridging the gap between DP definitions}
Throughout this and the following sections, we will utilise the terminology and notation introduced above. We furthermore introduce an adversary $\mathcal{A}$, who has black-box access to a single output of $\mathcal{M}$ and attempts to determine whether the observed output originated from $f$ being executed over $D$ vs. $D'$. We grant the adversary access to arbitrary side-information, excluding the sensitive database itself, but including e.g. an understanding of the inner workings of $\mathcal{M}$, the characteristics of the noise distribution, the query function $f$, as well as the ability to arbitrarily post-process the output of $\mathcal{M}$. 

\subsection{Relating the sensitivity index to $(\varepsilon, \delta)$-DP}
We assume that $\mathcal{A}$ observes an output $O$ from $\mathcal{M}$. The \textit{privacy loss random variable on $O$}, $\Omega$ is then defined as:

\begin{equation}
    \Omega  = \log \left (\frac{\mathbb{P}(\mathcal{M}(f(D)) = O)}{\mathbb{P}(\mathcal{M}(f(D')) = O)} \right)
    \label{priv_loss_rv}
\end{equation}
where $\log$ is the natural logarithm. As is common in literature, we will use the notation $\Omega_{\mathcal{M}(f(D))}$ to specify that $\Omega$ quantifies the privacy loss when $\mathcal{M}$ is executed whilst considering $D$ (rather than $D'$) the \say{base} dataset \cite{dwork2016concentrated}. Under the $(\varepsilon, \delta)$-DP definition, the magnitude of the privacy loss random variable is bounded by $e^{\varepsilon}$ with probability $1-\delta$. We are therefore interested in the probability $\delta$ that the magnitude of $\Omega_{\mathcal{M}(f(D))}$ \textit{exceeds} a certain value $\varepsilon$. By Definition \ref{eps_delta_dp_def}: 

\begin{equation}
    \mathbb{P}(\Omega_{\mathcal{M}(f(D))} \geq \varepsilon) \leq \delta
\end{equation}

We begin by noting that $\Omega_{\mathcal{M}(f(D))}$ is also normally distributed \cite{dpbook}:

\begin{equation}
\Omega_{\mathcal{M}(f(D))} \sim \mathcal{N}\left( \frac{\Delta^2}{2 \sigma^2}, \frac{\Delta^2}{\sigma^2} \right) \stackrel{\text{Def.\ref{psi_def}}}{=}  \mathcal{N}\left( \frac{1}{2} \psi^2, \psi^2 \right)
\end{equation}
where we have substituted $\Vert f(D)-f(D')\Vert_2 = \Delta$, assuming the worst-case distribution without loss of generality.
The distributions of $\Omega_{\mathcal{M}(f(D))}$ and of the outputs of $\mathcal{M}$ are exemplified in Figure \ref{fig:priv_loss_distribution}. 

\begin{figure*}[]
    \begin{scaletikzpicturetowidth}{\textwidth} 
    \begin{tikzpicture}[scale=\tikzscale]

        \definecolor{darkblue}{rgb}{0, 0.1, 0.5}

        \tikzstyle{every node}=[font=\small]

        \begin{axis}[
            no marks,
            smooth,
            axis x line*=bottom,
            axis y line*=left, 
            xmin=-9,
            xmax=9,
            ymin=0, 
            ymax=0.25,
            domain=-9:9,
            tick pos = left,
            axis line style={white!15!black},
            xtick={-2, 2},
            xticklabels={$f(D)$, $f(D')$},
            axis on top = true,
            ymajorticks=false,
            axis y line=none,
            height=5cm,
            width=15cm,
        ]
            \addplot[black]{gauss(-2, 2)};
            \addplot +[mark=none, dashed, black] coordinates {(-2, 0) (-2, 0.2)} node[pos=1.1]{$\mathcal{M}(f(D))$};
            \addplot[black]{gauss(2, 2)};
            \addplot +[mark=none, dashed, black] coordinates {(2, 0) (2, 0.2)} node[pos=1.1]{$\mathcal{M}(f(D'))$};
            \addplot[thick, color=darkblue]{gauss(0,2)} node[pos=0.5, yshift=9pt]{$\Omega_{\mathcal{M}(f(D))}$};

        \end{axis}
    \end{tikzpicture}
\end{scaletikzpicturetowidth}
    \caption{Exemplary plot showing the distributions of the Gaussian mechanism $\mathcal{M}$'s outputs given input databases $D$ and $D'$ (black curves). The \textit{privacy loss random variable} $\Omega_{\mathcal{M}(f(D))}$ (blue curve) is also normally distributed, a particular property of the Gaussian mechanism.}
    \label{fig:priv_loss_distribution}
\end{figure*}
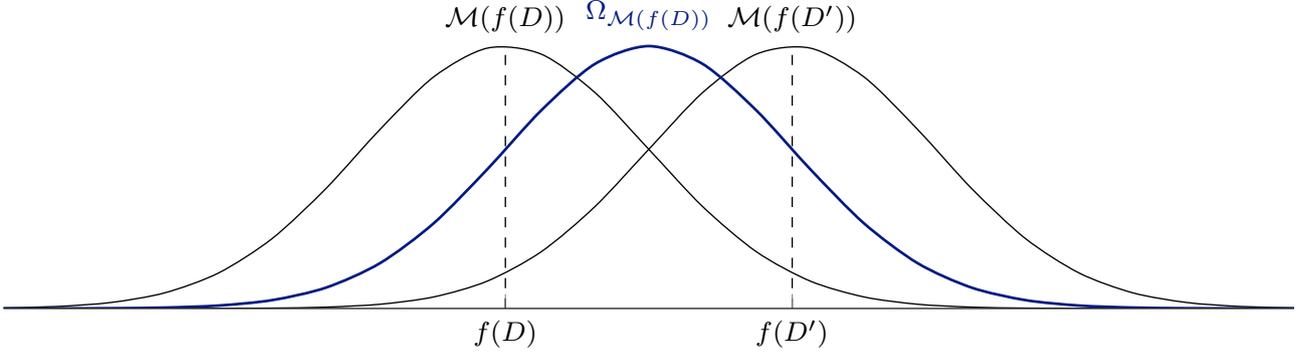

We will bound $\delta$ using the cumulative distribution function of $\Omega_{\mathcal{M}(f(D))}$ to relate $\Omega_{\mathcal{M}(f(D))}$ to $\psi$:

\begin{lemma}
If $\mathcal{M}$ is $(\varepsilon, \delta(\varepsilon))$-DP $\forall \; \varepsilon \geq 0, \delta \in [0,1]$, the following inequality holds:
\begin{equation}
    \Phi \left ( \frac{1}{2} \psi - \frac{1}{\psi} \varepsilon \right ) \leq \delta
\end{equation}
where $\Phi$ is the cumulative distribution function of the standard normal distribution.
\label{lemma_1}
\end{lemma}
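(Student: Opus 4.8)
The plan is to read the inequality off a short computation with the standard normal distribution function, taking as the single external input the tail bound on the privacy loss random variable that $(\varepsilon,\delta)$-DP supplies. The starting point is that applying Definition~\ref{eps_delta_dp_def} to the measurable ``bad'' event $\{O : \Omega_{\mathcal{M}(f(D))}(O) \ge \varepsilon\}$ --- the privacy-loss characterisation already invoked in the paragraph preceding the lemma --- gives $\mathbb{P}\bigl(\Omega_{\mathcal{M}(f(D))} \ge \varepsilon\bigr) \le \delta$. It then suffices to evaluate the left-hand side and recognise it as $\Phi\bigl(\tfrac12\psi - \tfrac1\psi\varepsilon\bigr)$.

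For the evaluation I would use the law $\Omega_{\mathcal{M}(f(D))} \sim \mathcal{N}\!\left(\tfrac12\psi^2, \psi^2\right)$ established above under the worst-case choice $\|f(D) - f(D')\|_2 = \Delta$. Standardising with a standard normal variable $Z$ gives $\mathbb{P}\bigl(\Omega_{\mathcal{M}(f(D))} \ge \varepsilon\bigr) = \mathbb{P}\!\left(Z \ge \tfrac{\varepsilon - \frac12\psi^2}{\psi}\right) = 1 - \Phi\!\left(\tfrac{\varepsilon - \frac12\psi^2}{\psi}\right)$, the division by $\psi = \Delta/\sigma$ being legitimate for any non-degenerate query with finite noise scale. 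Using the reflection identity $1 - \Phi(t) = \Phi(-t)$ and simplifying $\tfrac{\frac12\psi^2 - \varepsilon}{\psi} = \tfrac12\psi - \tfrac1\psi\varepsilon$ turns the right-hand side into $\Phi\!\left(\tfrac12\psi - \tfrac1\psi\varepsilon\right)$; chaining this with the tail bound yields $\Phi\!\left(\tfrac12\psi - \tfrac1\psi\varepsilon\right) = \mathbb{P}\bigl(\Omega_{\mathcal{M}(f(D))} \ge \varepsilon\bigr) \le \delta$, which is the assertion.

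The step I expect to be the main obstacle, and the one worth spelling out carefully, is the very first one: turning the set-quantified Definition~\ref{eps_delta_dp_def} into the scalar tail bound $\mathbb{P}(\Omega_{\mathcal{M}(f(D))} \ge \varepsilon) \le \delta$. This involves selecting the right event on which to test the DP inequality and justifying that the worst case over adjacent databases is attained at $\|f(D) - f(D')\|_2 = \Delta$, so that the normal law quoted above is the one governing $\Omega_{\mathcal{M}(f(D))}$. Everything downstream --- standardisation, the symmetry $1 - \Phi(t) = \Phi(-t)$, and the algebra collapsing the argument of $\Phi$ --- is routine. As a minor remark, at $\varepsilon = 0$ the statement reads $\Phi\!\left(\tfrac12\psi\right) \le \delta$, which correctly records that the Gaussian mechanism cannot be $(0,\delta)$-DP for $\delta$ below that threshold, and monotonicity of $\Phi$ shows the bound relaxes as $\varepsilon$ increases, as one would expect.
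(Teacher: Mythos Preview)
Your proposal is correct and follows essentially the same route as the paper: both start from the tail bound $\mathbb{P}(\Omega_{\mathcal{M}(f(D))}\ge\varepsilon)\le\delta$ implied by $(\varepsilon,\delta)$-DP, standardise the Gaussian privacy loss variable, and apply the symmetry of $\Phi$ to arrive at $\Phi\bigl(\tfrac12\psi-\tfrac1\psi\varepsilon\bigr)$. Your write-up is in fact slightly more explicit about the logical chain from Definition~\ref{eps_delta_dp_def} to the scalar tail bound than the paper's own proof, which simply computes the probability and leaves the $\le\delta$ implicit.
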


\begin{proof}
\begin{align*}
    &\mathbb{P}\left(\Omega_{\mathcal{M}(f(D)} \geq \varepsilon \right) = 
    \mathbb{P}\left(\mathcal{N}\left( \frac{1}{2} \psi^2, \psi \right)\geq \varepsilon \right) = \\ &= 
    \mathbb{P}\left(\mathcal{N}\left(0,1\right)\geq \frac{\varepsilon-\frac{1}{2} \psi^2}{\psi} \right)
\end{align*}
By symmetry of the standard normal distribution:
\begin{align*}
    & \mathbb{P}\left(\mathcal{N}\left(0,1\right)\geq \frac{\varepsilon-\frac{1}{2} \psi^2}{\psi} \right) = 
    \mathbb{P}\left(\mathcal{N}\left(0,1\right)\leq \frac{\frac{1}{2} \psi^2-\varepsilon}{\psi} \right) = \\ &=
    \Phi \left (\frac{\psi^2}{2\psi} - \frac{\varepsilon}{\psi}  \right) = 
    \Phi \left (\frac{1}{2} \psi - \frac{1}{\psi} \varepsilon \right)
\end{align*}
The inverse case, $\mathbb{P}\left(\Omega_{\mathcal{M}(f(D')} \leq -\varepsilon \right)$, required by the symmetry of Definition \ref{eps_delta_dp_def}, follows by the same argument.
\end{proof}
Notably, the $(\varepsilon, \delta(\varepsilon))$ notation in Lemma \ref{lemma_1} implies the existence of infinitely many valid $(\varepsilon, \delta)$ pairs, corresponding to the \textit{privacy profile} of $\mathcal{M}$ \cite{balle2018privacytight}. Moreover, it forms the foundation of the \textit{Analytic Gaussian Mechanism} \cite{balle2018improving}, which we will utilise later.\par
Building upon this relationship between the sensitivity index $\psi$ and the $(\varepsilon, \delta)$-DP definition, we now link $\psi$ to the hypothesis-testing DP interpretation, which we express via the ROC curve and its AUC.   

\subsection{Relating the sensitivity index to the ROC curve}
The \textit{probabilistic} definition of DP we investigated above is centred around quantifying $\mathcal{A}$'s (posterior) information gain as a function of their prior beliefs and the information disclosed by publishing the output of $\mathcal{M}$. Orthogonal to this approach, we now investigate the ability of $\mathcal{A}$ to distinguish between these outputs using a statistical hypothesis test. Formally, the adversary posits a \textit{null hypothesis} (e.g. $H_0$: \textit{the underlying database is $D$}) and an \textit{alternative hypothesis} (\textit{$H_1$: the underlying database is $D'$}) and then tries to determine whether to reject or fail to reject the null hypothesis. This interpretation is closely related to the probabilistic definition above, as the \textit{Neyman-Pearson-Lemma} \cite{neyman1933ix} motivates constructing the hypothesis test by relying on the \textit{likelihood ratio} to achieve optimal statistical power. This links it to the privacy loss random variable via the right hand side of equation (\ref{priv_loss_rv}), itself a (log-) likelihood ratio.\par
In intuitive terms, $\mathcal{A}$'s task can be summarised as follows:
\begin{enumerate}
    \item $\mathcal{A}$ formulates a binary classification problem concerned with discriminating whether the observed output $O$ of $\mathcal{M}$ is more likely given input $D$ or $D'$.
    \item $\mathcal{A}$ develops a classification algorithm $\mathcal{H}$ which takes as its input $O$ and a parameter $c$, termed the \textit{cut-off-value} or \textit{decision threshold}. $\mathcal{H}$ deterministically outputs a decision $d \in \{$\say{$D$},\say{$D'$}$\}$ depending on the observed values of $O$ and the value of $c$. 
\end{enumerate}
Our further analysis will rely on the following fundamental question: \textit{what is the optimal true-positive rate (TPR) $\mathcal{H}$ can afford $\mathcal{A}$ for any given false-positive rate (FPR) at any threshold $c$?} We will use Figure \ref{fig:tpr_fpr} to illustrate our argument.    

\begin{figure*}[h]
    \begin{scaletikzpicturetowidth}{\textwidth} 
    \begin{tikzpicture}[scale=\tikzscale]
        \definecolor{color0}{rgb}{0.275191,0.194905,0.496005}
        \definecolor{color1}{rgb}{0.212395,0.359683,0.55171}
        \definecolor{color2}{rgb}{0.153364,0.497,0.557724}

        \definecolor{forestgreen}{rgb}{0.13, 0.55, 0.13}
        \definecolor{tealblue}{rgb}{0.21, 0.46, 0.53}
        \definecolor{burntorange}{rgb}{0.8, 0.33, 0.0}
        \definecolor{mangotango}{rgb}{1.0, 0.51, 0.26}
        \definecolor{navyblue}{rgb}{0.0, 0.0, 0.5}
        \definecolor{mulberry}{rgb}{0.77, 0.29, 0.55}
        \definecolor{royalfuchsia}{rgb}{0.79, 0.17, 0.57}
        \definecolor{mediumseagreen}{rgb}{0.24, 0.7, 0.44}
        \definecolor{mediumspringbud}{rgb}{0.79, 0.86, 0.54}
        \definecolor{goldenpoppy}{rgb}{0.99, 0.76, 0.0}
        \definecolor{pinegreen}{rgb}{0.0, 0.47, 0.44}
        \definecolor{plum}{rgb}{0.56, 0.27, 0.52}
        \definecolor{cornflowerblue}{rgb}{0.39, 0.58, 0.93}
        \definecolor{lime}{rgb}{0.75, 1.0, 0.0}
        \definecolor{lightskyblue}{rgb}{0.53, 0.81, 0.98}
        \definecolor{limegreen}{rgb}{0.2, 0.8, 0.2}
        \definecolor{darksalmon}{rgb}{0.91, 0.59, 0.48}

        \tikzstyle{every node}=[font=\small]

        \begin{axis}[
            no marks,
            smooth,
            axis x line*=bottom,
            axis y line*=left, 
            xmin=-9,
            xmax=9,
            ymin=0, 
            ymax=0.3,
            domain=-9:9,
            tick pos = left,
            axis line style={white!15!black},
            xtick={-2, 2},
            xticklabels={$f(D)$, $f(D')$}, 
            ymajorticks=false,
            axis y line=none,
            height=5cm,
            width=15cm,
        ]
            \addplot[name path=A, black]{gauss(-2, 2)} node[above, pos=0.38] {$\mathcal{M}(f(D))$}; 
            \addplot[name path=B, black]{gauss(2, 2)} node[above, pos=0.62] {$\mathcal{M}(f(D'))$}; %
            \addplot +[name path=C,mark=none, thick, color=red] coordinates {(-1, 0) (-1, 0.2)} node[pos=1.1]{$c$}; 

            \addplot [draw=none, name path=D] {0}; 

            \addplot [color=darksalmon!30, opacity=0.5] fill between[of=A and B, soft clip={domain=-9:-1}]; 
            \addplot [fill=plum!30, draw=none, domain=-9:-1, opacity=0.5] {gauss(2,2)}\closedcycle; 
            \addplot [color=limegreen!50, draw=none, domain==-1:9, opacity=0.5] fill between[of=B and D, soft clip={domain=-1:9}]; 
            \addplot [color=yellow!30, draw=none, opacity=0.5] fill between[of=A and B, soft clip={domain=-1:0}]; 
            \addplot [color=lightskyblue!50, opacity=0.8] fill between[of=B and D, soft clip={domain=-1:0}]; 
            \addplot [color=lightskyblue!50, opacity=0.8] fill between[of=A and D, soft clip={domain=0:9}]; 

            \addplot[black]{gauss(-2, 2)}; 
            \addplot[black]{gauss(2, 2)}; 
            \addplot +[mark=none, thick, color=red] coordinates {(-1, 0) (-1, 0.2)} node[pos=1.1]{$c$}; 

            \node[label={TN}] at (700, 100) {};
            \node [label={FN}] at (750, -8) {};
            \node [label={FP}] at (840, 100) {};
            \node [label={TP}] at (1130, 100) {};
            
            \draw [gray!70, line width=0.00](1100, 0) -- (1100, 13);
        \end{axis}
    \end{tikzpicture}
\end{scaletikzpicturetowidth}
    \caption{Illustration of the distributions of $\mathcal{M}$'s possible outputs and their relation to $\mathcal{A}$'s adversarial task of binary classification between $D$ and $D'$. $\mathcal{A}$'s classification algorithm $\mathcal{H}$ outputs a decision $d$ based on the \textit{cut-off} or \textit{decision threshold} $c$ (red line). At any given value of $c$, $\mathcal{H}$ achieves a classification performance which is fully described by the rates of \textit{true positive} (TP), \textit{false positive} (FP), \textit{true negative} (TN) and \textit{false negative} (FN) classifications, corresponding to the shaded areas in the figure.}
    \label{fig:tpr_fpr}
\end{figure*}
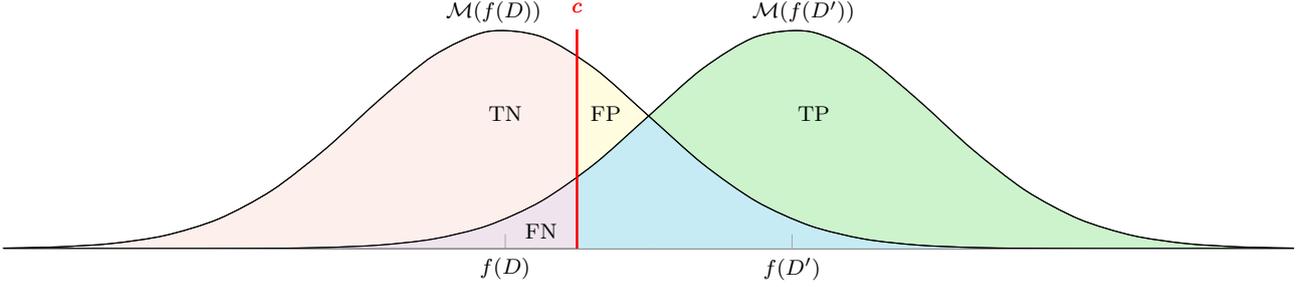

The classification performance of $\mathcal{H}$ can be fully described by its \textit{true positive rate} (TPR) and its \textit{false positive rate} (FPR) (the \textit{true negative} and \textit{false negative} rates follow by the fact that $\mbox{TNR}+\mbox{FPR} = \mbox{FNR}+\mbox{TPR} = 1$). Intuitively, if $\mathcal{A}$ cannot simultaneously achieve a high TPR \textit{and} a low FPR through the use of $\mathcal{H}$, then $\mathcal{M}$ preserves privacy. Evidently, the TPR and FPR are immediately dependent on the choice of $c$. The ROC curve represents the performance of $\mathcal{H}$ as $c$ is varied as a plot of TPR against FPR.

To create the ROC curve from the TPR and FPR, we will rely on the following property:

\begin{corollary}
Let $P_{f(D)}(x)$ be the density function of $\mathcal{M}$ when $x \sim f(D)$ and $P_{f(D')}(x)$ be the density function of $\mathcal{M}$ when $x \sim f(D')$. Then: 

\begin{align*}
    & \operatorname{TPR}(c) = \int_c^{+\infty} P_{f(D)}(t) dt = 1- \Phi_{f(D'), \sigma}(t) \\
    & \operatorname{FPR}(c) = \int^c_{-\infty} P_{f(D')}(t) dt = 1- \Phi_{f(D), \sigma}(t)
\end{align*}
where $c \in (-\infty, +\infty)$.
\end{corollary}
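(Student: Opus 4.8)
The plan is to reduce the $d$-dimensional decision problem to a one-dimensional one and then read the tail integrals straight off the Gaussian CDF. First I would invoke the Neyman–Pearson lemma: among all tests $\mathcal{H}$ of $H_0$ (database $D$) against $H_1$ (database $D'$) with a prescribed false-positive rate, the one maximising the true-positive rate thresholds the likelihood ratio $P_{f(D')}(O)/P_{f(D)}(O)$. Since $\mathcal{M}(f(D)) \sim \mathcal{N}(f(D),\sigma^2\mathbf{I}_d)$ and $\mathcal{M}(f(D')) \sim \mathcal{N}(f(D'),\sigma^2\mathbf{I}_d)$ share the covariance $\sigma^2\mathbf{I}_d$, the log-likelihood ratio is an affine, strictly monotone function of the scalar projection $\langle O, f(D')-f(D)\rangle$. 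Hence the optimal $\mathcal{H}$ is equivalent to a threshold test on a single real statistic, and without loss of generality (rotating coordinates and taking $\Vert f(D)-f(D')\Vert_2 = \Delta$ for the worst-case adjacent pair, exactly as in the previous subsection) I may work in $d=1$ with $\mathcal{M}(f(D)) \sim \mathcal{N}(\mu_0,\sigma^2)$, $\mathcal{M}(f(D')) \sim \mathcal{N}(\mu_1,\sigma^2)$ and $|\mu_1-\mu_0| = \Delta$.

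Next I would translate the classification counts into probabilities. Fixing the cut-off $c$ so that $\mathcal{H}$ returns the label associated with one tail, the true-positive rate is the probability of returning that label under the matching hypothesis, i.e. a tail integral of $P_{f(D)}$, and the false-positive rate is the probability of returning it under the other hypothesis, a tail integral of $P_{f(D')}$; the complementary orientation yields the mirror statement, and since $\operatorname{TNR}+\operatorname{FPR} = \operatorname{FNR}+\operatorname{TPR} = 1$ all four rates are then determined. By the very definition of the cumulative distribution function $\Phi_{\mu,\sigma}$ of $\mathcal{N}(\mu,\sigma^2)$, each such integral equals $\Phi_{\mu,\sigma}(c)$ or $1-\Phi_{\mu,\sigma}(c)$, which gives the displayed identities.

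Finally, to recover the ROC curve itself I would eliminate $c$: standardising via $t \mapsto (t-\mu)/\sigma$ rewrites both expressions through the standard normal CDF $\Phi$, and substituting $\mu_1-\mu_0 = \Delta = \psi\sigma$ from Definition \ref{psi_def} produces the closed form $\operatorname{TPR} = \Phi\!\left(\Phi^{-1}(\operatorname{FPR}) + \psi\right)$, which exhibits $\psi$ as the single shape parameter of the curve. The one genuinely delicate step is the first: justifying that a scalar threshold test is optimal in $\mathbb{R}^d$ and pinning down the sign and tail conventions so that $\operatorname{TPR}$ and $\operatorname{FPR}$ land on the claimed sides of $\Phi$. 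Once the problem has been made one-dimensional, the remainder is routine bookkeeping with the normal CDF.
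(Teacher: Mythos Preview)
Your argument is correct, but it is considerably more elaborate than what the paper actually supplies. The paper's own proof is a two-sentence visual appeal to Figure~\ref{fig:tpr_fpr}: it simply observes that the mechanism's outputs are Gaussian and identifies $\operatorname{TPR}(c)$ and $\operatorname{FPR}(c)$ with the appropriate shaded tail areas in the picture. That is the entire proof. By contrast, you first invoke Neyman--Pearson to justify that the optimal test in $\mathbb{R}^d$ collapses to a scalar threshold on the projection along $f(D')-f(D)$, and only then read off the tail integrals. This buys you something the paper glosses over: an explicit reason why a one-dimensional threshold test is the right object to analyse in the first place, and a handle on the sign conventions that the Corollary's displayed formulas leave somewhat ambiguous. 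Your final paragraph, which eliminates $c$ to obtain $\operatorname{TPR}=\Phi(\Phi^{-1}(\operatorname{FPR})+\psi)$, actually overshoots the Corollary and proves the subsequent Lemma on $R(x)$ as well; for the present statement only your second paragraph is needed.
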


\begin{proof}
Notice in figure \ref{fig:tpr_fpr} that $\mathcal{M}$'s outputs follow a normal distribution. Then, $\operatorname{TPR}(c)$ is the sum of the green and blue shaded areas and $\operatorname{FPR}(c)$ is the sum of the yellow and the blue shaded areas.
\end{proof}

We are now able to relate the sensitivity index to the classification performance of $\mathcal{H}$ via the ROC curve. We will use the following facts:

\begin{corollary}[Adapted from \cite{gonccalves2014roc}, (3.1)]
Let $X$ and $Y$ be independent normal variables with means $\bm{\mu_0}$ and $\bm{\mu_1}$ and covariance matrices $\sigma_0^2\mathbf{I}$ and $\sigma_1^2\mathbf{I}$. Then, the \textit{binormal} ROC curve is given by:
\begin{equation}
    \operatorname{ROC}(t) = \Phi \left ( a + b\,\Phi^{-1}(t) \right), t \in (0,1)
    \label{roc_corollary}
\end{equation}
where $a=\frac{\Vert \bm{\mu_1}-\bm{\mu_0} \Vert_2}{\sigma}$ and $b=\frac{\sigma_0}{\sigma_1}$.
\end{corollary}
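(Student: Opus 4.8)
The plan is to collapse the $d$-dimensional test to a one-dimensional one by projecting the observation onto the direction that separates the two means, and then to recover the ROC curve by eliminating the decision threshold $c$ from the $(\operatorname{FPR}(c),\operatorname{TPR}(c))$ pair supplied by the preceding corollary. This mirrors, and lifts to $\mathbb{R}^d$, the univariate derivation in \cite{gonccalves2014roc}.

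For the reduction, write $Z$ for the observed output (so $Z$ is distributed as $X$ under $H_0$ and as $Y$ under $H_1$) and set $\bm{u}=(\bm{\mu_1}-\bm{\mu_0})/\Vert\bm{\mu_1}-\bm{\mu_0}\Vert_2$, a unit vector. Consider the scalar statistic $T=\langle\bm{u},Z\rangle$. Since $X$ and $Y$ are isotropic Gaussians, $T$ is again normal: under $H_0$, $T\sim\mathcal{N}(\langle\bm{u},\bm{\mu_0}\rangle,\sigma_0^2)$ and under $H_1$, $T\sim\mathcal{N}(\langle\bm{u},\bm{\mu_1}\rangle,\sigma_1^2)$, with mean separation $\langle\bm{u},\bm{\mu_1}-\bm{\mu_0}\rangle=\Vert\bm{\mu_1}-\bm{\mu_0}\Vert_2=:m$; after re-centring we may take the $H_0$-mean to be $0$ and the $H_1$-mean to be $m$. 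Justifying that $\mathcal{H}$ may be taken to threshold $T$ is the step I expect to be the main obstacle: when $\sigma_0=\sigma_1$ — the case directly relevant to the GM, for which $b=1$ — this is precisely the Neyman--Pearson-optimal likelihood-ratio test \cite{neyman1933ix}, because the log-likelihood ratio in (\ref{priv_loss_rv}) is an affine function of $T$; when $\sigma_0\neq\sigma_1$ the likelihood ratio is quadratic in $Z$, so the statement is to be read as the ROC curve of the mean-separating projection rule rather than of the globally optimal test.

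Reduced to one dimension, the rates of a threshold test that declares the alternative whenever $T>c$ are, by the preceding corollary, Gaussian tail probabilities: $\operatorname{FPR}(c)=\mathbb{P}_{H_0}(T>c)=1-\Phi(c/\sigma_0)$ and $\operatorname{TPR}(c)=\mathbb{P}_{H_1}(T>c)=1-\Phi\big((c-m)/\sigma_1\big)$. Putting $t=\operatorname{FPR}(c)$ and using $\Phi^{-1}(1-t)=-\Phi^{-1}(t)$ yields $c=-\sigma_0\,\Phi^{-1}(t)$; substituting this into $\operatorname{TPR}(c)$ and applying $1-\Phi(x)=\Phi(-x)$ once more gives
\[
\operatorname{ROC}(t)=1-\Phi\!\left(\frac{-\sigma_0\Phi^{-1}(t)-m}{\sigma_1}\right)=\Phi\!\left(\frac{m}{\sigma_1}+\frac{\sigma_0}{\sigma_1}\,\Phi^{-1}(t)\right),
\]
which is exactly (\ref{roc_corollary}) with $a=m/\sigma_1=\Vert\bm{\mu_1}-\bm{\mu_0}\Vert_2/\sigma_1$ and $b=\sigma_0/\sigma_1$. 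That $\operatorname{ROC}$ is continuous and strictly increasing on $(0,1)$, and that letting $c$ range over $\mathbb{R}$ sweeps $t$ over all of $(0,1)$, both follow immediately from the strict monotonicity and continuity of $\Phi$.
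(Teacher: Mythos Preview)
Your derivation is correct. The paper, however, does not prove this corollary at all: it is quoted as an external fact from \cite{gonccalves2014roc} and used without argument. So you are supplying strictly more than the paper does, and your reduction-then-eliminate-the-threshold route is the standard one.

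Two remarks worth keeping. First, you rightly flag that the statement's denominator ``$\sigma$'' in $a=\Vert\bm{\mu_1}-\bm{\mu_0}\Vert_2/\sigma$ is ambiguous; your computation shows it must be $\sigma_1$, which collapses to the paper's $\sigma$ only in the equal-variance case that the GM actually uses. Second, your caveat about $\sigma_0\neq\sigma_1$ is well placed: in that regime the likelihood ratio is quadratic, so the ``binormal ROC'' is the ROC of the linear projection rule rather than of the Neyman--Pearson test. The paper never needs the unequal-variance case, so this subtlety is moot downstream, but it is a genuine gap in the corollary as stated.
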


\begin{corollary}[Adapted from \cite{gonccalves2014roc}, (3.2)]
The area (AUC) under the \textit{binormal} ROC curve is then given by:
\begin{equation}
    \operatorname{AUC} = \Phi \left ( \frac{a}{\sqrt{1+b^2}} \right)
    \label{auc_corollary}
\end{equation}
where $a$ and $b$ are defined as above.
\end{corollary}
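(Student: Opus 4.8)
The plan is to compute the area under the curve directly from its definition, $\operatorname{AUC} = \int_0^1 \operatorname{ROC}(t)\,dt$, substituting the binormal form from~(\ref{roc_corollary}) to obtain $\operatorname{AUC} = \int_0^1 \Phi\!\left(a + b\,\Phi^{-1}(t)\right) dt$. The first step will be the change of variables $u = \Phi^{-1}(t)$, i.e.\ $t = \Phi(u)$ and $dt = \phi(u)\,du$ where $\phi$ is the standard normal density; since $\Phi$ is a strictly increasing bijection from $\mathbb{R}$ onto $(0,1)$, the limits become $-\infty$ and $+\infty$, and the integral turns into the Gaussian-weighted form $\int_{-\infty}^{+\infty} \Phi(a + bu)\,\phi(u)\,du$.

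The second step will be to evaluate this integral probabilistically. Introducing independent standard normal variables $Z_0$ and $Z_1$, one has $\int_{-\infty}^{+\infty}\Phi(a+bu)\,\phi(u)\,du = \mathbb{E}\!\left[\Phi(a + bZ_0)\right] = \mathbb{E}\!\left[\mathbb{P}(Z_1 \le a + bZ_0 \mid Z_0)\right] = \mathbb{P}(Z_1 - bZ_0 \le a)$. Since $Z_1 - bZ_0$ is a linear combination of independent centred Gaussians it is itself Gaussian with mean $0$ and variance $1 + b^2$, whence $\mathbb{P}(Z_1 - bZ_0 \le a) = \Phi\!\left(a/\sqrt{1+b^2}\right)$, which is precisely~(\ref{auc_corollary}).

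An alternative route, which ties the statement more closely to the adversarial picture of the previous sections, would be to invoke the Mann--Whitney identity $\operatorname{AUC} = \mathbb{P}(S_1 > S_0)$, where $S_1$ and $S_0$ are the scalar likelihood-ratio scores that $\mathcal{H}$ assigns to a draw from $f(D)$ and from $f(D')$. These scores are affine in the corresponding observation, hence Gaussian with a mean separation of $\Vert \bm{\mu_1} - \bm{\mu_0} \Vert_2$ (after normalising by $\sigma$) and variances scaling as $\sigma_0^2$ and $\sigma_1^2$; the difference $S_1 - S_0$ is then Gaussian, and computing $\mathbb{P}(S_1 - S_0 > 0)$ yields the same closed form once $a$ and $b$ are substituted.

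The calculation is otherwise routine, so the only place that warrants care is the legitimacy of the substitution $u = \Phi^{-1}(t)$ --- the monotonicity of $\Phi$ and the endpoint behaviour of $\Phi^{-1}$ --- together with convergence of the transformed integral, which is immediate because $0 \le \Phi(a+bu) \le 1$ and $\phi$ is integrable. If one instead follows the Mann--Whitney route, the minor subtlety is reducing the $d$-dimensional Gaussians in the hypothesis to a one-dimensional score statistic; this is justified because the Neyman--Pearson-optimal test depends on the observation only through the scalar log-likelihood ratio, the very reduction already performed for $\Omega_{\mathcal{M}(f(D))}$ earlier in the paper.
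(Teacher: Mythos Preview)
Your proof is correct. The paper itself does not supply a proof of this corollary: it is simply quoted from \cite{gonccalves2014roc} as a known fact about the binormal ROC model and then instantiated with $a=\psi$, $b=1$ in the subsequent lemma. So there is no ``paper's own proof'' to compare against; you have filled in a gap that the authors intentionally delegated to the cited reference.

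Your primary route (the substitution $u=\Phi^{-1}(t)$ followed by the probabilistic evaluation of $\int_{\mathbb{R}}\Phi(a+bu)\,\phi(u)\,du$ via independent standard normals) is the standard textbook derivation and is airtight as written. The alternative Mann--Whitney route is also valid and, as you note, connects more directly to the hypothesis-testing narrative of the surrounding sections; in the equal-variance case $b=1$ relevant to the paper it reduces immediately to $\mathbb{P}(S_1>S_0)=\Phi(\psi/\sqrt{2})$, which is exactly equation~(\ref{auc_formula}). Either argument would be an acceptable replacement for the bare citation.
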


\begin{lemma}
Let $R: (0,1) \rightarrow [0,1] \times [0,1]$ be the function of the ROC curve describing the classification performance of $\mathcal{H}$. Then:

\begin{equation}
    R(x) = \Phi \left( \psi + \Phi^{-1}(x) \right)
    \label{roc_formula}
\end{equation}
\end{lemma}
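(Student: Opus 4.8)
The plan is to derive (\ref{roc_formula}) as a direct specialisation of the binormal ROC curve formula (\ref{roc_corollary}) to the two distributions that the adversary $\mathcal{A}$ must discriminate. First I would fix the hypotheses concretely: under $H_0$ the observed output $O$ is a draw from $P_{f(D)} = \mathcal{N}(f(D), \sigma^2 \mathbf{I}_d)$, and under $H_1$ it is a draw from $P_{f(D')} = \mathcal{N}(f(D'), \sigma^2 \mathbf{I}_d)$ (Definition \ref{def_gaussian_mech}). The key structural observation is that both laws are normal with a \emph{common}, isotropic covariance matrix $\sigma^2 \mathbf{I}_d$, so the preconditions of the binormal-ROC corollary are met with $\bm{\mu_0} = f(D)$, $\bm{\mu_1} = f(D')$ and $\sigma_0 = \sigma_1 = \sigma$.

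Next I would invoke the Neyman--Pearson lemma \cite{neyman1933ix} to assert that the classifier $\mathcal{H}$ whose ROC curve dominates that of every other decision rule is a threshold test on the likelihood ratio, which — as already noted after equation (\ref{priv_loss_rv}) — is a monotone function of the projection of $O$ onto the direction $f(D') - f(D)$. This reduces the effectively $d$-dimensional testing problem to a one-dimensional one along that line, which is exactly the situation covered by the corollary. Substituting the parameters above into (\ref{roc_corollary}) gives $b = \sigma_0/\sigma_1 = 1$ and $a = \Vert f(D') - f(D) \Vert_2 / \sigma$, hence $\operatorname{ROC}(x) = \Phi\!\left( a + \Phi^{-1}(x) \right)$.

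Finally I would close the gap between $a$ and $\psi$: since DP quantifies the worst case over adjacent databases, we take $D \wr D'$ realising the maximum in (\ref{l2_sens}), so that $\Vert f(D) - f(D') \Vert_2 = \Delta$ and therefore $a = \Delta/\sigma = \psi$ by Definition \ref{psi_def}. This yields $R(x) = \Phi\!\left( \psi + \Phi^{-1}(x) \right)$ as claimed.

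As for the main obstacle: the computation itself is routine once the binormal corollary is available, so the substantive points to get right are (i) arguing carefully that the curve traced out by the likelihood-ratio threshold family is genuinely the ROC curve of the \emph{optimal} $\mathcal{H}$, rather than of some particular suboptimal rule, so that the statement concerns $\mathcal{A}$'s best achievable TPR/FPR trade-off; and (ii) justifying the reduction from $\mathbb{R}^d$ to a single dimension together with the worst-case choice of the adjacent pair, which is what allows the scalar $\psi$ (and not the full vector $f(D) - f(D')$) to fully determine the curve. I would also remark on orientation: swapping the roles of $H_0$ and $H_1$ leaves $a = \Vert \bm{\mu_1} - \bm{\mu_0} \Vert_2 / \sigma$ unchanged, so (\ref{roc_formula}) is insensitive to which database is labelled the null, consistent with the symmetry of Definition \ref{eps_delta_dp_def}.
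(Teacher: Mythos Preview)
Your proposal is correct and follows essentially the same route as the paper: both arguments instantiate the binormal ROC corollary (\ref{roc_corollary}) with $\bm{\mu_0}=f(D)$, $\bm{\mu_1}=f(D')$, $\sigma_0=\sigma_1=\sigma$, then substitute the worst-case $\Vert f(D)-f(D')\Vert_2=\Delta$ to obtain $a=\psi$, $b=1$. Your write-up is in fact more careful than the paper's, which does not spell out the Neyman--Pearson optimality step or the reduction from $\mathbb{R}^d$ to one dimension; these are left implicit in the cited corollary.
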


\begin{proof}
The ROC curve is constructed by parametrically plotting:

\begin{equation}
(\operatorname{FPR}(c), \operatorname{TPR}(c)), \, c \in (-\infty, +\infty)
\label{roc_parametric}
\end{equation}
on the unit square and is strictly monotonically increasing \cite{gonccalves2014roc}. Recognising that $P_{f(D)}(x)$ and $P_{f(D')}(x)$ follow a normal distribution with means $f(D)$ and $f(D')$ and common standard deviation $\sigma$, we can leverage the properties of the binormal ROC curve and use equation (\ref{roc_corollary}) to re-write equation (\ref{roc_parametric}) as:

\begin{align*}
    R(x) &= \Phi \left ( \frac{\Vert f(D')-f(D) \Vert_2+\sigma \Phi^{-1}(x)}{\sigma}\right) = \\ &=
    \Phi \left ( \frac{\Delta+\sigma \Phi^{-1}(x)}{\sigma}\right)= \\ &=
    \Phi \left ( \frac{\Delta}{\sigma} +\Phi^{-1}(x)\right) = \\ &=\Phi \left( \psi + \Phi^{-1}(x) \right), x \in (0,1)
\end{align*}
\end{proof}

\begin{lemma}
The AUC of $R$ is given by:

\begin{equation}
    \operatorname{AUC}_R = \Phi \left( \frac{\psi}{\sqrt{2}} \right)
\label{auc_formula}
\end{equation}
\end{lemma}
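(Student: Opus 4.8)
The plan is to obtain the result as an immediate specialisation of the binormal AUC formula, equation~(\ref{auc_corollary}), to the Gaussian mechanism. In our setting the two output distributions of $\mathcal{M}$ are $\mathcal{N}(f(D), \sigma^2 \mathbf{I})$ and $\mathcal{N}(f(D'), \sigma^2 \mathbf{I})$, which share the same covariance matrix; hence, in the notation of equation~(\ref{auc_corollary}), $\sigma_0 = \sigma_1 = \sigma$, so that $b = \sigma_0/\sigma_1 = 1$, while $a = \Vert f(D') - f(D) \Vert_2 / \sigma = \Delta/\sigma = \psi$, exactly as computed in the proof of equation~(\ref{roc_formula}). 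Substituting $a = \psi$ and $b = 1$ into $\operatorname{AUC} = \Phi(a/\sqrt{1+b^2})$ gives $\operatorname{AUC}_R = \Phi(\psi/\sqrt{2})$, which is the claim.

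As a self-contained cross-check that does not invoke the cited corollary, I would instead integrate the ROC curve of equation~(\ref{roc_formula}) directly: $\operatorname{AUC}_R = \int_0^1 R(x)\,dx = \int_0^1 \Phi(\psi + \Phi^{-1}(x))\,dx$. The change of variables $u = \Phi^{-1}(x)$, so that $x = \Phi(u)$ and $dx = \phi(u)\,du$ with $\phi$ the standard normal density, transforms this into $\int_{-\infty}^{+\infty} \Phi(\psi + u)\,\phi(u)\,du$. This integral equals $\mathbb{P}(Z \le \psi + U)$ for independent standard normal variables $Z$ and $U$, i.e.\ $\mathbb{P}(Z - U \le \psi)$; since $Z - U \sim \mathcal{N}(0, 2)$, standardising gives $\Phi(\psi/\sqrt{2})$, in agreement with the first derivation.

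There is essentially no obstacle here: once equation~(\ref{roc_formula}) is established, the statement is a one-line specialisation of a known formula, or equivalently a short Gaussian integral. The only points deserving a moment's care are (i) checking that the common-variance assumption forces $b = 1$ and that $a$ coincides with the sensitivity index $\psi$, and (ii), in the alternative derivation, justifying the substitution (legitimate because $\Phi$ is a smooth strictly increasing bijection $\mathbb{R} \to (0,1)$) and noting that $R$ is integrable on $(0,1)$, which is immediate since $0 \le R \le 1$ and $R$ is continuous and monotone by equation~(\ref{roc_formula}).
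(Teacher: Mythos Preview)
Your proposal is correct and its primary derivation is essentially identical to the paper's own proof: both simply substitute $a=\psi$ and $b=1$ into the binormal AUC formula of equation~(\ref{auc_corollary}). Your additional direct-integration cross-check is not in the paper but is a nice independent verification.
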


\begin{proof}
Substitute $a=\frac{\Vert f(D)-f(D')\Vert_2}{\sigma}=\frac{\Delta}{\sigma}=\psi$ and $b=\frac{\sigma}{\sigma}=1$ in equation (\ref{auc_corollary}).
\end{proof}
Note that in both cases, we have substituted $\Vert f(D)- f(D')\Vert_2 = \Delta$, as we are interested in the \textit{worst-case} ROC curve and AUC, respectively.

From equations (\ref{roc_formula}) and (\ref{auc_formula}), the following relationships are observed:
\begin{equation}
    \lim_{\psi \rightarrow +\infty} \operatorname{AUC}_R = 1
\end{equation}
This case corresponds to a mechanism which offers \textit{no privacy}, as $\mathcal{A}$ is able to distinguish between $D$ and $D'$ with a $\operatorname{TPR}=1$ and a $\operatorname{FPR}=0$ and the vertex of the graph of $R(x)$ approaches the point $(0,1)$. This situation arises when $\Delta \rightarrow +\infty$ and/or when $\sigma \rightarrow 0$. Conversely,
\begin{equation}
    \lim_{\psi \rightarrow 0} \operatorname{AUC}_R = 0.5
\end{equation}
In this case, $\mathcal{A}$ is unable to distinguish between $D$ and $D'$ and graph of $R(x)$ approaches the graph of $\operatorname{TPR}(\operatorname{FPR})=\operatorname{FPR}$, which is the line through the origin with unit slope. This situation arises when $\Delta \rightarrow 0$ and/or when $\sigma \rightarrow +\infty$ and corresponds to \textit{perfect privacy}.

Figure \ref{fig:roc_curve} visually demonstrates the ROC curve and the AUC. As the AUC depends only on the value of $\psi$, which is sufficient to fully characterise the behaviour of the GM, the results above and the AUC can be utilised to (visually) compare the privacy guarantees between GMs. This represents an advantage over $(\varepsilon, \delta)$-DP whose characterisation and comparison would require an infinite collection of $(\varepsilon, \delta(\varepsilon))$ tuples. Moreover, the two components of these tuples represent fundamentally different quantities, further encumbering the mental model.

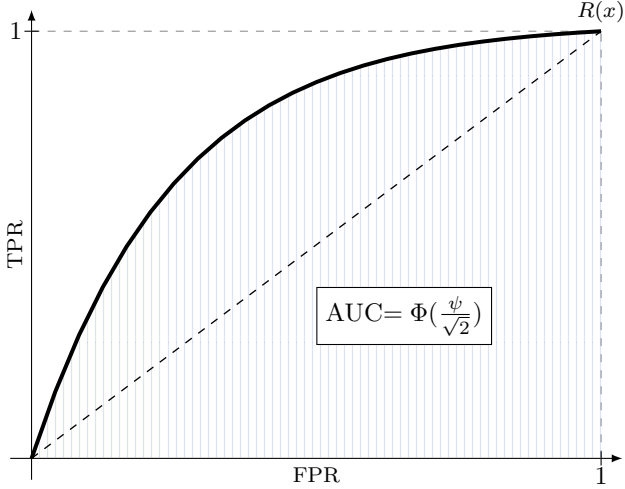
\begin{figure}[]
    \begin{scaletikzpicturetowidth}{\columnwidth} 
    \begin{tikzpicture}[scale=\tikzscale]

        \definecolor{darkblue}{rgb}{0, 0.1, 0.5}

        \draw[-latex, name path=X] (-0.3,0) -- (8.3,0) node[pos=0.5,anchor=north]{\small FPR};
        \draw[-latex] (0,-0.3) -- (0,6.3) node[pos=0.5,anchor=south,rotate=90]{\small TPR};

        \draw [dashed,gray!90] (8,0) -- (8,6);
        \draw [dashed,gray!90] (0,6) -- (8,6);

        \draw (8,0.1) -- (8,-0.1);
        \node at (8,0)[anchor=north]{$1$};
        \draw (0.1,6) -- (-0.1,6 );
        \node at (0,6)[anchor=east]{$1$};

        \draw [domain=0:8,pattern=vertical lines, pattern color= darkblue!50,line width=0, opacity=0.5, draw=none] plot ({\x}, {6/(1-exp(-8/2))*(1-exp(-\x/2))}) -- (8,0) -- cycle; 
        \node [draw, fill=white] at (4,2) [anchor=west] {AUC$=\Phi (\frac{\psi}{\sqrt{2}})$};

        \draw [domain=0:8,variable=\x,line width=1.5, name path=R] plot ({\x}, {6/(1-exp(-8/2))*(1-exp(-\x/2))});
        \node at (8,6) [anchor=south]{\small $R(x)$};

        \draw [domain=0:8, variable=\x, line width=0.5, dashed] plot({\x}, {(6/8)*\x});

    \end{tikzpicture}
\end{scaletikzpicturetowidth}
    \caption{Exemplary illustration of the ROC curve $R(x)$ and of $\operatorname{AUC}_R$. The ROC curve describes the performance of a classification algorithm at all decision thresholds by plotting its TPR against its FPR on the unit square.}
    \label{fig:roc_curve}
\end{figure}

\subsection{Recovering $(\varepsilon, \delta)$-DP}
The ROC curve comprehensively characterises $\mathcal{M}$ at \textit{all} possible threshold values. We now turn to the question of how to losslessly convert between the \textit{privacy profile} of $\mathcal{M}$ and $R(x)$. From the works of \cite{wasserman2010statistical} and \cite{privacy_regions}, the following relationship is known:
\begin{corollary}[\cite{wasserman2010statistical} and \cite{privacy_regions}, Theorem (2.1)]
For a classification algorithm $\mathcal{H}$ executed on the outputs of an $(\varepsilon, \delta)$-DP mechanism, the following holds:
\begin{equation}
\operatorname{TPR}_{\mathcal{H}} \leq e^{\epsilon} \, \operatorname{FPR}_{\mathcal{H}} + \delta
\end{equation}
\end{corollary}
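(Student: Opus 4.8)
The plan is to reduce the claim directly to the definition of $(\varepsilon,\delta)$-DP (Definition \ref{eps_delta_dp_def}) by recognising that the decision rule of $\mathcal{H}$ carves $\mathcal{M}$'s range into admissible events. First I would fix a threshold $c$ and let $\mathcal{S}_c$ denote the \say{rejection region} of $\mathcal{H}$, i.e. the set of outputs $O$ for which $\mathcal{H}(O,c)$ returns the decision \say{$D'$} (rejecting the null hypothesis $H_0: D$ in favour of $H_1: D'$). Since $\mathcal{H}$ is deterministic given $c$ and merely post-processes $O$, the map $O \mapsto \mathcal{H}(O,c)$ is an admissible post-processing of $\mathcal{M}$, so $\mathcal{S}_c$ is a legitimate subset of $\mathcal{M}$'s range. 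With the standard identification of \say{positive} with the decision \say{$D'$},
\begin{equation*}
    \operatorname{TPR}_{\mathcal{H}} = \mathbb{P}\big(\mathcal{M}(f(D')) \in \mathcal{S}_c\big), \qquad \operatorname{FPR}_{\mathcal{H}} = \mathbb{P}\big(\mathcal{M}(f(D)) \in \mathcal{S}_c\big).
\end{equation*}

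Next I would invoke the $(\varepsilon,\delta)$-DP guarantee. Because $\mathcal{M}$ is $(\varepsilon,\delta)$-DP and Definition \ref{eps_delta_dp_def} is symmetric in the adjacent pair, instantiating it on the pair $(D',D)$ with the event $\mathcal{S}_c$ gives
\begin{equation*}
    \mathbb{P}\big(\mathcal{M}(f(D')) \in \mathcal{S}_c\big) \leq e^{\varepsilon}\, \mathbb{P}\big(\mathcal{M}(f(D)) \in \mathcal{S}_c\big) + \delta,
\end{equation*}
which upon substituting the two probabilities above is precisely $\operatorname{TPR}_{\mathcal{H}} \leq e^{\varepsilon}\,\operatorname{FPR}_{\mathcal{H}} + \delta$. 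Since $c$ was arbitrary, the inequality holds simultaneously at every point of the ROC curve (\ref{roc_parametric}); applying the (symmetric) definition in the reversed ordering $(D,D')$ to the same region additionally yields the companion bound $\operatorname{FPR}_{\mathcal{H}} \leq e^{\varepsilon}\,\operatorname{TPR}_{\mathcal{H}} + \delta$, so the whole curve is confined to the intersection of the corresponding half-planes — the \textit{privacy region} of \cite{privacy_regions}.

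I expect the only delicate points to be bookkeeping rather than analysis. One must check that (i) the decision region $\mathcal{S}_c$ is genuinely a measurable subset of $\mathcal{M}$'s range, so that the \say{for all subsets $\mathcal{S}$} quantifier in Definition \ref{eps_delta_dp_def} actually applies — this is exactly where the post-processing immunity of DP (among the abilities granted to $\mathcal{A}$) enters — and (ii) that the orientation of the inequality matches the chosen labelling of $H_0$ and $H_1$, which is handled by the symmetry of the definition. Notably, nothing in the argument uses the Gaussian structure of $\mathcal{M}$: the bound is a property of \emph{any} $(\varepsilon,\delta)$-DP mechanism, which is why it is recorded here as a standalone corollary, to be specialised afterwards — via Lemma \ref{lemma_1} and the ROC formula (\ref{roc_formula}) — to the sensitivity index $\psi$.
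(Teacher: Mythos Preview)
Your argument is correct and is the standard one: identify the classifier's rejection region $\mathcal{S}_c$ with an event in $\mathcal{M}$'s range, read off $\operatorname{TPR}$ and $\operatorname{FPR}$ as the two probabilities appearing in Definition~\ref{eps_delta_dp_def}, and apply the (symmetric) $(\varepsilon,\delta)$-DP inequality. The bookkeeping remarks about measurability/post-processing and about labelling are apt, and you are right that nothing Gaussian is used.

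Note, however, that the paper does not supply its own proof of this corollary at all: it is imported wholesale from \cite{wasserman2010statistical} and \cite{privacy_regions} (Theorem~2.1) and merely quoted as a known fact before being combined with Lemma~\ref{lemma_1} in the subsequent Lemma~\ref{roc_to_delta}. So there is no paper proof to compare against; your write-up simply fills in the omitted (and standard) justification.
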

This implies a linear relationship between $\operatorname{TPR}_{\mathcal{H}}$ and $\operatorname{FPR}_{\mathcal{H}}$. Moreover, the \textit{optimal} classifier's graph in this setting must also be tangent to $R(x)$, which represents the worst case in terms of privacy \cite{metz1978basic}. This also follows intuitively by the fact that the vertex of such a classifier's graph lies closest to the point of perfect discrimination $(\operatorname{FPR}, \operatorname{TPR}) = (0,1)$.\par
Based on these assumptions, we can formulate the following relationship:
\begin{lemma}
The tangent with slope $e^{\varepsilon}$ to the graph of a classifier $\mathcal{H}$ with ROC curve $R(x)$ has intercept at most $\delta$.
\label{roc_to_delta}
\end{lemma}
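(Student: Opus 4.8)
The plan is to combine the explicit formula $R(x)=\Phi(\psi+\Phi^{-1}(x))$ from equation (\ref{roc_formula}) with the linear bound $\operatorname{TPR}\le e^{\varepsilon}\operatorname{FPR}+\delta$ from the Wasserman--Kairouz corollary, the latter applied to the Neyman--Pearson-optimal test (whose ROC curve is precisely $R$, since $R$ was constructed as the optimal trade-off). First I would differentiate $R$: writing $u=\Phi^{-1}(x)$, so that $x=\Phi(u)$ and $dx=\phi(u)\,du$, one obtains $R'(x)=\phi(\psi+u)/\phi(u)=e^{-\psi^{2}/2-\psi u}=e^{-\psi^{2}/2-\psi\Phi^{-1}(x)}$. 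Because $\Phi^{-1}$ is strictly increasing and $\psi>0$, $R'$ is strictly decreasing on $(0,1)$, so $R$ is strictly concave there; moreover $R'$ ranges over all of $(0,+\infty)$, so for every $\varepsilon\ge 0$ there is a unique $x^{\star}\in(0,1)$ with $R'(x^{\star})=e^{\varepsilon}$, namely $\Phi^{-1}(x^{\star})=-\tfrac{\varepsilon}{\psi}-\tfrac{\psi}{2}$, i.e. $x^{\star}=\Phi(-\tfrac{\varepsilon}{\psi}-\tfrac{\psi}{2})$.

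Next I would write the tangent line $\tau(x)=e^{\varepsilon}(x-x^{\star})+R(x^{\star})$ and invoke concavity: a concave differentiable function lies on or below each of its tangents, so $R(x)\le\tau(x)$ for all $x\in(0,1)$, with equality at $x^{\star}$. On the other hand, $(\varepsilon,\delta)$-DP of $\mathcal{M}$, via the corollary applied to the optimal classifier, forces the line $\ell(x)=e^{\varepsilon}x+\delta$ to dominate the ROC curve: $R(x)\le\ell(x)$ for all $x\in(0,1)$. Since $\tau$ and $\ell$ share the slope $e^{\varepsilon}$, the difference $\tau-\ell$ is the constant $\tau(0)-\delta$; evaluating at $x^{\star}$ gives $\tau(0)-\delta=\tau(x^{\star})-\ell(x^{\star})=R(x^{\star})-\ell(x^{\star})\le 0$, so the intercept $\tau(0)$ of the tangent of slope $e^{\varepsilon}$ is at most $\delta$, as claimed.

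Finally I would make the bound concrete: since $R(x^{\star})=\Phi(\psi+\Phi^{-1}(x^{\star}))=\Phi(\tfrac{\psi}{2}-\tfrac{\varepsilon}{\psi})$, the intercept is $\tau(0)=R(x^{\star})-e^{\varepsilon}x^{\star}=\Phi(\tfrac{\psi}{2}-\tfrac{\varepsilon}{\psi})-e^{\varepsilon}\,\Phi(-\tfrac{\psi}{2}-\tfrac{\varepsilon}{\psi})$, which is exactly the privacy profile of the analytic Gaussian mechanism \cite{balle2018improving}; as the subtracted term is strictly positive, $\tau(0)<\Phi(\tfrac{1}{2}\psi-\tfrac{1}{\psi}\varepsilon)\le\delta$ by Lemma \ref{lemma_1}, giving an independent confirmation and tying the statement back to the earlier results. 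The main obstacle is the concavity step — establishing rigorously that the ROC curve lies below all of its tangents and that a unique tangent of the prescribed slope exists in the open interval $(0,1)$ (which is where the monotonicity and range of $R'=e^{-\psi^{2}/2-\psi\Phi^{-1}(\cdot)}$ do the work); once that is in place, the remainder is routine bookkeeping with affine functions. A secondary subtlety is that the corollary must be invoked for the optimal test rather than for an arbitrary $\mathcal{H}$, so that the dominating line is compared against $R$ itself.
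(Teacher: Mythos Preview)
Your proof is correct. It differs from the paper's argument in a useful way. The paper phrases the computation through the Legendre--Fenchel transform of $R$: it sets up $R^{\star}(t)=ty-R(y)$, solves for the critical point, evaluates at $t=e^{\varepsilon}$ to obtain $\Phi\!\left(\tfrac{1}{2}\psi-\tfrac{1}{\psi}\varepsilon\right)$, and then appeals to Lemma~\ref{lemma_1} to bound this by $\delta$. You instead differentiate $R$ directly, establish strict concavity from the monotonicity of $R'(x)=e^{-\psi^{2}/2-\psi\Phi^{-1}(x)}$, locate the unique tangent point $x^{\star}=\Phi(-\tfrac{\varepsilon}{\psi}-\tfrac{\psi}{2})$, and then compare the tangent $\tau$ with the dominating line $\ell(x)=e^{\varepsilon}x+\delta$ supplied by the Wasserman--Kairouz corollary; since both have slope $e^{\varepsilon}$, the inequality $R(x^{\star})\le\ell(x^{\star})$ immediately gives $\tau(0)\le\delta$.

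What each approach buys: the paper's route is terser but, as written, never separates the value $R(x^{\star})$ from the actual intercept $R(x^{\star})-e^{\varepsilon}x^{\star}$ (it bounds the former by $\delta$ via Lemma~\ref{lemma_1}, which suffices because the latter is smaller). Your route makes the concavity hypothesis explicit, shows existence and uniqueness of the tangent of prescribed slope, and---as a bonus---produces the exact intercept
\[
\tau(0)=\Phi\!\left(\tfrac{\psi}{2}-\tfrac{\varepsilon}{\psi}\right)-e^{\varepsilon}\,\Phi\!\left(-\tfrac{\psi}{2}-\tfrac{\varepsilon}{\psi}\right),
\]
i.e.\ the analytic Gaussian privacy profile, before bounding it. Your closing remark that the corollary must be applied to the \emph{optimal} test (so that the dominating line is compared against $R$ itself) is a genuine subtlety that the paper leaves implicit.
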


\begin{proof}
To derive the tagent's intercept, we rely on the \textit{Legendre-Fenchel}-transform of $R$ (which, as strictly monotonically increasing, is amenable to such transformation). Let $R^{\star}$ be the convex conjugate of $R$. Then:
\begin{equation}
    R^{\star}(t) = ty - R(y)
    \label{fenchel}
\end{equation}
The condition $\frac{dR^{\star}}{dt} = 0$ is satisfied when:
\begin{equation}
    t = \Phi \left( m \right ), m = -\frac{\Delta^2+2\sigma^2\log(y)}{2\Delta\sigma}
\end{equation}
Then, substituting $t$ into equation (\ref{roc_formula}):
\begin{align*}
    R(t) &= \Phi \left(\frac{\Delta}{\sigma} + \Phi^{-1} \left( \Phi \left( m \right ) \right ) \right ) = \\ &=
    \Phi \left(\frac{\Delta}{\sigma} + m \right) = \\ &=
    \Phi \left(\frac{\Delta}{\sigma} -\frac{\Delta^2+2\sigma^2\log(y)}{2\Delta\sigma} \right) = \\ &=
    \Phi \left( \frac{\Delta}{\sigma} - \frac{\Delta^2}{2\Delta\sigma} - \frac{2\sigma^2\log(y)}{2\Delta\sigma} \right) = \\ &=
    \Phi \left( \frac{\Delta}{2\sigma} - \frac{\sigma\log(y)}{\Delta} \right)
\end{align*}
When $t=e^{\varepsilon}$:

\begin{align*}
    R(t) \big |_{t=e^{\varepsilon}} &= \Phi \left( \frac{\Delta}{2\sigma} - \frac{\sigma\log(e^{\varepsilon})}{\Delta} \right) = \\ &=
    \Phi \left( \frac{\Delta}{2\sigma} - \frac{\sigma\varepsilon}{\Delta} \right) \stackrel{\text{Lemma \ref{lemma_1}}}{=} \\ &=
    \Phi \left ( \frac{1}{2} \psi - \frac{1}{\psi} \varepsilon \right ) \leq \delta
\end{align*}
\end{proof}
This relationship is visualised in Figure \ref{fig:delta_slope}. We note that a similar case is observed when the slope of the line is $e^{-\varepsilon}$ (required by the symmetric definition of $(\varepsilon, \delta)$-DP), whereby the tangent in question is reflected about the diagonal $y=-x+1$. The resulting set of infinitely many symmetric line pairs draw the boundary of the ROC curve and are conceptually equivalent to the lines bounding the \textit{privacy region} in \cite{privacy_regions} and the \textit{trade-off} function in \cite{dong2019gaussian}.

\begin{figure}[]
    \begin{scaletikzpicturetowidth}{\columnwidth} 
 
    \begin{tikzpicture}[scale=\tikzscale]

        \draw[-latex] (-0.3,0) -- (8.3,0) node[pos=0.5,anchor=north]{\small FPR};
        \draw[-latex] (0,-0.3) -- (0,6.3) node[pos=0.5,anchor=south,rotate=90]{\small TPR};

        \draw [dashed,gray!90] (8,0) -- (8,6);
        \draw [dashed,gray!90] (0,6) -- (8,6);

        \draw [domain=0:8,variable=\x,line width=1.5] plot ({\x}, {6/(1-exp(-8/2))*(1-exp(-\x/2))});
        \node at (8,6) [anchor=south]{\small $R(x)$};

        \draw (8,0.1) -- (8,-0.1);
        \node at (8,0)[anchor=north]{$1$};
        \draw (0.1,6) -- (-0.1,6 );
        \node at (0,6)[anchor=east]{$1$};

        \coordinate [label=center:\textbullet] (a) at ({2},{6/(1-exp(-8/2))*(1-exp(-2/2))});

        \draw[line width=1.5] (a) -- ++(48.3:3); 
        \draw[line width=1.5] (a) -- ++({48.3+180}:3.2); 

        \coordinate (b) at ({0},{6/(1-exp(-8/2))*(1-exp(-2/2))-2*1.124});

        \draw ($(b) - (0.3,0)$) -- ++(0.5,0); 

        \draw [latex-latex] (-0.2,0) -- ($(b) - (0.2,0)$) node[pos=0.5,anchor=east]{$\delta$};

        \draw (a) -- ++(1.5,0);

        \draw -- ($(a) + (1,0)$) arc (0:48.3:1) node[pos=0.7,anchor=west]{$e^\varepsilon$}; 
    
    \end{tikzpicture}
\end{scaletikzpicturetowidth}
    \caption{Relationship between the tangent with slope $e^{\varepsilon}$ to the ROC curve $R(x)$ and its intercept $\delta$. We note that, although we have plotted the probability $\delta$ and the TPR (which also represents a probability) on the same axis, the two are distinct.}
    \label{fig:delta_slope}
\end{figure}
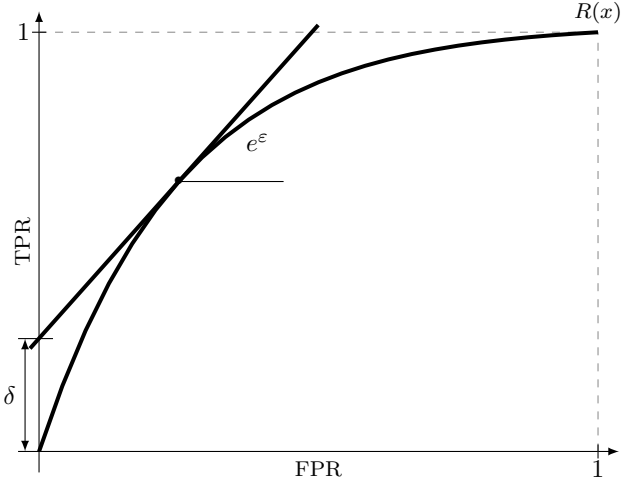

Of note, this fact also admits an interesting geometric interpretation: The (fictitious) ROC curve of a \say{failed} privacy mechanism, that is, a mechanism for which the $\delta$-probability event of catastrophic failure has occurred, is shown in Figure \ref{fig:failed_mechanism}. The curve intercepts the $y$-axis at a point $(0, \kappa), \kappa >0$, indicating that $\mathcal{A}$ is able to discriminate between $D$ and $D'$ with a FPR of $0$ while still achieving a TPR of $\kappa$. A similar result can be found in Figure 3 of \cite{Pless}.

\begin{figure}[]
    \begin{scaletikzpicturetowidth}{\columnwidth} 
 
    \begin{tikzpicture}[scale=\tikzscale]

        \draw[-latex] (-0.3,0) -- (8.3,0) node[pos=0.5,anchor=north]{\small FPR};
        \draw[-latex] (0,-0.3) -- (0,6.3) node[pos=0.5,anchor=south,rotate=90]{\small TPR};

        \draw [dashed,gray!90] (8,0) -- (8,6);
        \draw [dashed,gray!90] (0,6) -- (8,6);

        \draw (8,0.1) -- (8,-0.1);
        \node at (8,0)[anchor=north]{$1$};
        \draw (0.1,6) -- (-0.1,6 );
        \node at (0,6)[anchor=east]{$1$};
    
        \coordinate (a) at ({2},{6/(1-exp(-8/2))*(1-exp(-2/2))});

        \draw [domain=2:8,variable=\x,line width=1.5] plot ({\x}, {6/(1-exp(-8/2))*(1-exp(-\x/2))});
        \node at (8,6) [anchor=south]{\small $R(x)$};

        \draw [domain=0:2,variable=\x,line width=1.5, dashed] plot ({\x}, {6/(1-exp(-8/2))*(1-exp(-\x/2))});

        \draw[line width=1.5] (a) -- ++({48.3+180}:3.0);
        
        \draw (-0.1,{6/(1-exp(-8/2))*(1-exp(-2/2))-2*1.124}) -- (+0.1,{6/(1-exp(-8/2))*(1-exp(-2/2))-2*1.124});
        
        \node at (0,{6/(1-exp(-8/2))*(1-exp(-2/2))-2*1.124})[anchor=east]{$\kappa$};
        

        
    \end{tikzpicture}
\end{scaletikzpicturetowidth}
    \caption{Fictitious ROC curve of a failed privacy mechanism $\mathcal{M}$ (solid line) compared to its actual ROC curve (dashed continuation). The adversary $\mathcal{A}$ is able to distinguish between $D$ and $D'$ with a TPR of $\kappa>0$ while maintaining an FPR of $0$.}
    \label{fig:failed_mechanism}
\end{figure}
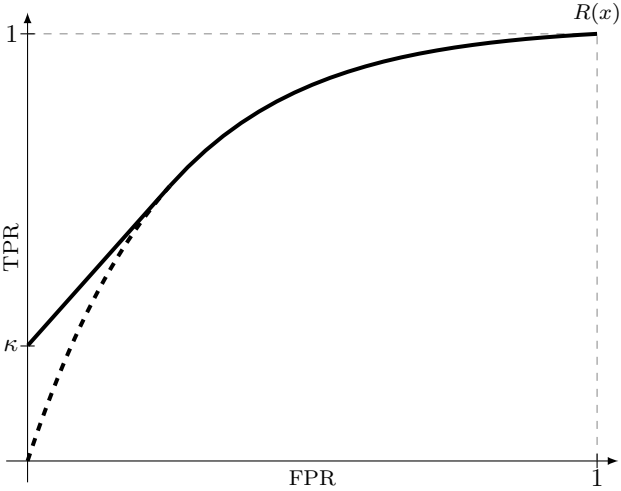

\section{Relating the sensitivity index to f-DP and Rényi DP}
In the previous sections, we introduced the sensitivity index $\psi$ as a \textit{singular} value comprehensively characterising the GM and related it to the mechanism's $(\varepsilon, \delta(\varepsilon))$ privacy profile. We now study the relationship between $\psi$ and two newer DP interpretations, \textit{f-DP} and \textit{Rényi} DP.

\subsection{Converting between $\psi$-DP and f-DP}
We begin by recalling that the \textit{f-DP} framework \cite{dong2019gaussian} is based on the \textit{hypothesis testing} interpretation of DP and employs \textit{trade-off} functions between $\mathcal{A}$'s \textit{Type I} ($\alpha$) and \textit{Type II} ($\beta$) errors to characterise the behaviour of mechanisms. When the mechanism analysed is the GM, this strongly links the ROC curve used in our work to a particular subtype of \textit{f-DP}, namely \textit{Gaussian DP} (GDP). Indeed, it is easily shown that the trade-off function used in GDP is an affine transformation of a specific instance of the ROC curve:

\begin{corollary}[Equation (6) in \cite{dong2019gaussian}]
The trade-off function $G_\mu$ is given by:
\begin{equation}
G_\mu \coloneqq T\left( \mathcal{N}(0,1),\mathcal{N}(\mu,1) \right)     
\end{equation}
and
\begin{equation}
G_\mu(\alpha) = \Phi \left(\Phi^{-1}(1-\alpha) -\mu) \right)
\end{equation}
\end{corollary}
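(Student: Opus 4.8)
The plan is to invoke the Neyman--Pearson lemma to identify the optimal rejection region for testing $H_0:\mathcal{N}(0,1)$ against $H_1:\mathcal{N}(\mu,1)$, and then to read off the resulting Type I and Type II errors directly. Recall that the trade-off function is defined by $T(P,Q)(\alpha)=\inf\{\beta_\phi:\alpha_\phi\le\alpha\}$, the infimum of the Type II error $\beta_\phi$ over all (possibly randomised) tests $\phi$ whose Type I error $\alpha_\phi$ is at most $\alpha$. Without loss of generality I would take $\mu\ge 0$, since the case $\mu<0$ is symmetric under relabelling.

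First I would write down the likelihood ratio between the two densities, $\frac{d\mathcal{N}(\mu,1)}{d\mathcal{N}(0,1)}(x)=\exp(\mu x-\mu^2/2)$, which is strictly increasing in $x$. By the Neyman--Pearson lemma the uniformly most powerful test at level $\alpha$ therefore rejects $H_0$ precisely on a right tail $\{x>c\}$, where the threshold $c=c(\alpha)$ is chosen so that the size constraint holds with equality; because the null distribution is continuous, no randomisation is required and the infimum in the definition of $T$ is attained.

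Next I would compute the two error rates as functions of $c$. Under $H_0$, $\alpha=\mathbb{P}_{X\sim\mathcal{N}(0,1)}(X>c)=1-\Phi(c)$, hence $c=\Phi^{-1}(1-\alpha)$. Under $H_1$, writing $X=\mu+Z$ with $Z\sim\mathcal{N}(0,1)$, the Type II error is $\beta=\mathbb{P}(X\le c)=\mathbb{P}(Z\le c-\mu)=\Phi(c-\mu)$. Substituting the expression for $c$ yields $G_\mu(\alpha)=\beta=\Phi\!\left(\Phi^{-1}(1-\alpha)-\mu\right)$, which is the claimed identity.

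Finally, to make the ``affine transformation of the ROC curve'' remark precise, I would identify the false-positive rate with $\alpha$ and the true-positive rate with $1-\beta$, and use $\Phi^{-1}(1-\alpha)=-\Phi^{-1}(\alpha)$ to obtain $1-G_\mu(\alpha)=\Phi\!\left(\mu+\Phi^{-1}(\alpha)\right)$, which is exactly the ROC curve of equation (\ref{roc_formula}) upon setting $\mu=\psi$. I do not anticipate a genuine obstacle: the only points demanding care are the reduction to $\mu\ge 0$ (so the likelihood ratio is increasing rather than decreasing, which merely flips the side of the tail) and the observation that continuity of the Gaussian makes the optimal test non-randomised, so that the trade-off function is attained rather than merely approached.
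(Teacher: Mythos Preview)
Your argument via the Neyman--Pearson lemma is correct and is the standard derivation. Note, however, that the paper does not supply its own proof of this corollary: it simply quotes the result as Equation~(6) of \cite{dong2019gaussian} and uses it as a black box in the subsequent lemma relating $R(x)$ to $G_\mu$. Your proof therefore fills in a step the authors chose to cite rather than reproduce; it matches the approach taken in the original reference.
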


\begin{lemma}
Let $R(x)$ be the ROC curve of $\mathcal{M}$ defined above and let $\Delta=\mu$ and $\sigma=1$. Then:
\begin{equation}
    R(x) = 1-G_{\mu}(x)
\end{equation}
\end{lemma}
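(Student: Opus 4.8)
The plan is to reduce the claimed identity to two elementary symmetry properties of the standard normal distribution, namely $1-\Phi(z)=\Phi(-z)$ and its consequence $\Phi^{-1}(1-x)=-\Phi^{-1}(x)$ for $x\in(0,1)$. No deep machinery is required; the content is purely a change of sign bookkeeping, and the main ``obstacle'' is simply to carry the sign flips through consistently.

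First I would specialise Lemma (equation \ref{roc_formula}) to the setting of the statement. Setting $\Delta=\mu$ and $\sigma=1$ gives $\psi=\Delta/\sigma=\mu$, so the ROC curve of $\mathcal{M}$ becomes
\begin{equation}
R(x)=\Phi\!\left(\psi+\Phi^{-1}(x)\right)=\Phi\!\left(\mu+\Phi^{-1}(x)\right),\qquad x\in(0,1).
\end{equation}
Next I would rewrite the right-hand side $1-G_{\mu}(x)$ using the Corollary recalling equation (6) of \cite{dong2019gaussian}, which gives $G_{\mu}(x)=\Phi\!\left(\Phi^{-1}(1-x)-\mu\right)$. Applying $1-\Phi(z)=\Phi(-z)$ with $z=\Phi^{-1}(1-x)-\mu$ yields
\begin{equation}
1-G_{\mu}(x)=1-\Phi\!\left(\Phi^{-1}(1-x)-\mu\right)=\Phi\!\left(\mu-\Phi^{-1}(1-x)\right).
\end{equation}

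Finally I would use the complement identity for the quantile function. Since $\Phi$ is the CDF of a distribution symmetric about $0$, $\Phi(-a)=1-\Phi(a)$; taking $a=\Phi^{-1}(x)$ shows $\Phi\!\left(-\Phi^{-1}(x)\right)=1-x$, hence $\Phi^{-1}(1-x)=-\Phi^{-1}(x)$. Substituting this into the previous display gives
\begin{equation}
1-G_{\mu}(x)=\Phi\!\left(\mu+\Phi^{-1}(x)\right)=R(x),
\end{equation}
which is the asserted equality. I would close by remarking that this makes precise the statement that the GDP trade-off function $G_{\mu}$ is the reflection of the (worst-case) ROC curve $R$ about the anti-diagonal, i.e. the map $(x,y)\mapsto(1-x,1-y)$, consistent with the general duality between trade-off functions and ROC curves already used implicitly in the preceding sections; the only place where the worst-case assumption enters is the substitution $\Vert f(D)-f(D')\Vert_2=\Delta=\mu$ inherited from Lemma \ref{roc_formula}.
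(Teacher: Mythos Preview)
Your proof is correct and follows essentially the same route as the paper's own argument: specialise $\psi=\mu$, then use the two symmetry identities $\Phi(z)=1-\Phi(-z)$ and $\Phi^{-1}(1-x)=-\Phi^{-1}(x)$ to pass between $R(x)=\Phi(\mu+\Phi^{-1}(x))$ and $1-G_{\mu}(x)$. The only difference is cosmetic---you work from $1-G_{\mu}(x)$ toward $R(x)$ whereas the paper goes the other way.

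One small slip in your closing remark: the identity $R(x)=1-G_{\mu}(x)$ describes a reflection of the graph about the horizontal line $y=\tfrac{1}{2}$ (equivalently, the paper's ``reflect about the $x$-axis, then shift up by $1$''), not the map $(x,y)\mapsto(1-x,1-y)$, and that map is in any case a point reflection through $(\tfrac12,\tfrac12)$ rather than a reflection about the anti-diagonal. This does not affect the proof itself.
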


\begin{proof}
By substituting in equation (\ref{roc_formula}) and using the fact that $\psi=\frac{\Delta}{\sigma}=\frac{\mu}{1}=\mu$:
\begin{align*}
R(x) &= \Phi \left( \mu + \Phi^{-1}(x) \right) = 1-\left(-\Phi^{-1}(x) - \mu \right) = \\ &= 1-\left(\Phi^{-1}(1-x) - \mu \right) = 1-G_{\mu}(x)      
\end{align*}
where we have used the facts that $\Phi(x)=1-\Phi(-x)$ and $-\Phi^{-1}(x)=\Phi^{-1}(1-x)$. This operation corresponds to reflecting the graph of $R(x)$ about the $x$-axis and then transposing its $x$-coordinates by $+1$. 
\end{proof}

We will also use the following property of $G_{\mu}$:

\begin{corollary}
$G_{\mu}$ is strictly monotonically decreasing in $\mu$ such that if $\mu_1 \geq \mu_2$, then $G_{\mu_1} \leq G_{\mu_2}$.
\end{corollary}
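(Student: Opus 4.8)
The plan is to work directly from the closed form for $G_\mu$ supplied by the preceding corollary, namely $G_\mu(\alpha) = \Phi(\Phi^{-1}(1-\alpha) - \mu)$, and to read off the monotonicity in $\mu$ pointwise in the significance level $\alpha$. First I would fix an arbitrary $\alpha \in (0,1)$, so that $\Phi^{-1}(1-\alpha)$ is a finite constant, and regard $\mu \mapsto G_\mu(\alpha)$ as a real function of the single variable $\mu$. The inner map $\mu \mapsto \Phi^{-1}(1-\alpha) - \mu$ is affine with slope $-1$, hence strictly decreasing; composing it with $\Phi$, which is strictly increasing on $\mathbb{R}$ since the standard normal density is everywhere positive, shows that $\mu \mapsto G_\mu(\alpha)$ is strictly decreasing. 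Equivalently, one may differentiate: $\frac{\partial}{\partial \mu} G_\mu(\alpha) = -\varphi\big(\Phi^{-1}(1-\alpha)-\mu\big) < 0$, where $\varphi = \Phi'$ denotes the standard normal density.

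From this the stated functional inequality is immediate: if $\mu_1 \ge \mu_2$ then $\Phi^{-1}(1-\alpha) - \mu_1 \le \Phi^{-1}(1-\alpha) - \mu_2$, whence $G_{\mu_1}(\alpha) = \Phi(\Phi^{-1}(1-\alpha)-\mu_1) \le \Phi(\Phi^{-1}(1-\alpha)-\mu_2) = G_{\mu_2}(\alpha)$, with strict inequality whenever $\mu_1 > \mu_2$. Since $\alpha \in (0,1)$ was arbitrary, this gives the claimed pointwise ordering $G_{\mu_1} \le G_{\mu_2}$ of trade-off functions. The only remaining bookkeeping is the two boundary points $\alpha \in \{0,1\}$, where $\Phi^{-1}(1-\alpha) = \pm\infty$ forces $G_\mu(0) = 1$ and $G_\mu(1) = 0$ independently of $\mu$; there the two functions coincide, which is consistent with the non-strict inequality asserted in the statement.

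I do not anticipate a genuine obstacle here: the result is essentially a one-line consequence of the strict monotonicity of $\Phi$. If anything, the single point deserving a word of care is the distinction between the \emph{pointwise} strict decrease of $\mu \mapsto G_\mu(\alpha)$ at each interior $\alpha$ and the weaker statement $G_{\mu_1} \le G_{\mu_2}$ as an inequality of functions, which necessarily degenerates to equality at $\alpha \in \{0,1\}$; I would state the pointwise strict version explicitly and note that it entails the displayed claim. An alternative, equally short route would invoke the earlier identity $R(x) = 1 - G_\mu(x)$ together with the Lemma giving $R(x) = \Phi(\psi + \Phi^{-1}(x))$ (with $\psi = \mu$ at $\sigma = 1$): raising $\mu$ shifts the argument of $\Phi$ upward, so $R$ increases pointwise and hence $G_\mu = 1 - R$ decreases pointwise. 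I would nonetheless prefer the direct computation above, as it is self-contained and does not rely on the sign conventions of the ROC-to-trade-off reflection.
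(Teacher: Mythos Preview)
Your direct computation from the closed form $G_\mu(\alpha)=\Phi(\Phi^{-1}(1-\alpha)-\mu)$ is correct and entirely self-contained; the pointwise argument via the strict monotonicity of $\Phi$ (or equivalently the sign of the $\mu$-derivative) is the cleanest way to obtain the result, and your handling of the endpoints $\alpha\in\{0,1\}$ is appropriate.

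The paper, by contrast, takes precisely the alternative route you sketch at the end: it appeals to the relation $R(x)=1-G_\mu(x)$ together with the earlier formula $R(x)=\Phi(\psi+\Phi^{-1}(x))$ and the reflection, arguing that since $R$ is increasing the claim for $G_\mu$ follows. Your direct approach is preferable for the reason you already identify: it avoids relying on the sign conventions of the ROC-to-trade-off transformation, and it makes transparent that the relevant monotonicity is in the parameter $\mu$ (not merely in the argument $x$), a distinction the paper's one-line proof leaves somewhat implicit. Either route is valid, but yours is the more explicit of the two.
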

\begin{proof}
The claim follows from the fact that $R(x)$ is strictly monotonically increasing and from the reflection operation.
\end{proof}

We can now show the following connection between \textit{f-DP} and the $\psi$-based DP interpretation:

\begin{lemma}
Let $\mathcal{M}$ be a GM with sensitivity index $\psi$ on the query function $f$ over adjacent databases $D$ and $D'$. Then, $\mathcal{M}$ is $\psi$-GDP if and only if $\psi \leq \mu$.
\label{psi_to_gdp}
\end{lemma}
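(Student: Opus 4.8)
The plan is to reduce the claim to a pointwise comparison of two explicit functions on the unit interval. Recall that $\mathcal{M}$ is $\mu$-GDP precisely when, for every pair of adjacent databases, its trade-off function is bounded below by $G_\mu$ everywhere on $[0,1]$. So the first step is to make the link between the trade-off function and the ROC curve $R$ explicit: if $\alpha$ denotes the Type I error (the FPR) and $\beta$ the Type II error, then the power $1-\beta$ equals the TPR, so the optimal trade-off function of $\mathcal{H}$ against $\mathcal{M}$ is $T(\alpha) = 1 - R(\alpha)$. This is just the reflection-and-transposition already used in the proof that $R(x) = 1 - G_\mu(x)$, read in the other direction, together with the Neyman--Pearson optimality implicit in the construction of $R$ (which guarantees $1-R$ is the trade-off function itself, not merely an upper bound on the adversary's power).

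Second, I would invoke the closed forms already derived. From equation \eqref{roc_formula}, $R(x) = \Phi(\psi + \Phi^{-1}(x))$, hence $T(x) = 1 - \Phi(\psi + \Phi^{-1}(x))$; and from the proof of the preceding lemma, $G_\mu(x) = 1 - \Phi(\mu + \Phi^{-1}(x))$. Here one should note that $R$, and hence $T$, is the \emph{worst-case} ROC curve, obtained by substituting $\Vert f(D)-f(D')\Vert_2 = \Delta$; for any other adjacent pair the separation is at most $\Delta$, so by monotonicity of $\Phi$ that pair's trade-off function only dominates the worst-case one. It therefore suffices to compare $T$ with $G_\mu$ for this single worst-case function, which discharges the quantifier over all adjacent $D, D'$.

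Third comes the comparison itself. We have $\mathcal{M}$ is $\mu$-GDP $\iff 1-\Phi(\psi+\Phi^{-1}(x)) \geq 1-\Phi(\mu+\Phi^{-1}(x))$ for all $x \in (0,1) \iff \Phi(\psi+\Phi^{-1}(x)) \leq \Phi(\mu+\Phi^{-1}(x))$ for all such $x$. Since $\Phi$ is strictly increasing and $\Phi^{-1}$ maps $(0,1)$ onto all of $\mathbb{R}$, this is equivalent to $\psi + u \leq \mu + u$ for every $u \in \mathbb{R}$, i.e. to $\psi \leq \mu$; the endpoints $x \in \{0,1\}$ are trivial since $T$ and $G_\mu$ agree there. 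An even shorter route: combining $T = 1-R$ with the preceding lemma (taken at $\psi = \mu$) gives $T = G_\psi$, and the corollary that $G_\mu$ is strictly decreasing in $\mu$ then yields $G_\psi \geq G_\mu \iff \psi \leq \mu$ at once.

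I expect the only real subtlety to be the bookkeeping around the worst case: making sure the quantifier over all adjacent $D, D'$ in the $f$-DP definition is correctly collapsed to the single pair realising $\Vert f(D)-f(D')\Vert_2 = \Delta$, and making sure the Neyman--Pearson optimality baked into $R$ is cited rather than re-derived so that the identity $T = 1 - R$ is genuinely an equality. Everything after that is a one-line consequence of the strict monotonicity of $\Phi$ (or of the already-proven monotonicity of $G_\mu$ in $\mu$).
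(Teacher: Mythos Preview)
Your proposal is correct and aligns with the paper's own proof. In particular, your ``shorter route'' is essentially the paper's argument verbatim: it identifies the trade-off function for an arbitrary adjacent pair as $G_{\Vert f(D)-f(D')\Vert_2/\sigma}$, bounds this below by $G_\psi$ using $\Vert f(D)-f(D')\Vert_2 \leq \Delta$ together with the monotonicity of $G_\mu$ in $\mu$, and then concludes $G_\psi \geq G_\mu \Leftrightarrow \psi \leq \mu$ from that same monotonicity corollary. Your longer route merely unpacks the monotonicity step via the explicit form $\Phi(\psi+\Phi^{-1}(x))$ and the strict increase of $\Phi$, which is exactly how the paper proves the monotonicity corollary in the first place; and your remarks on collapsing the quantifier over adjacent pairs and on the Neyman--Pearson optimality underlying $R$ are, if anything, more careful than the paper's own bookkeeping.
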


\begin{proof}
The trade-off function between $\mathcal{M}_{f(D)}$ and $\mathcal{M}_{f(D')}$ is:
\begin{equation}
    T\left( \mathcal{N}(f(D),\sigma),\mathcal{N}(f(D'),\sigma) \right) = G_{\frac{\Vert f(D) - f(D') \Vert_2}{\sigma}}
\end{equation}
But: 
\begin{equation}
\frac{\Vert f(D) - f(D') \Vert_2}{\sigma} \leq \frac{\Delta}{\sigma} = \psi
\end{equation}
Hence:
\begin{equation}
T\left( \mathcal{M}_{f(D)},\mathcal{M}_{f(D')} \right) \geq G_{\psi} 
\end{equation}
As $G_{\mu}$ is strictly monotonically decreasing in $\mu$, $G_{\psi} \geq G_{\mu} \Leftrightarrow \psi \leq \mu$. But, if $G_{\psi} \geq G_{\mu}$, $\mathcal{M}$ is $\psi$-GDP. Thus, $\psi \leq \mu$ is necessary and sufficient for $\mathcal{M}$ to be $\psi$-GDP.    
\end{proof}

This relationship between \textit{f-DP} and $\psi$-based interpretations of the GM is noteworthy, as it endows them with the attractive properties of GDP, while maintaining the advantages of the (in our opinion) more intuitive ROC-curve-based GM characterisation. In particular, the following properties are a direct consequence of Lemma \ref{psi_to_gdp}:

\begin{corollary}[Group Privacy, Theorem 2.14 in \cite{dong2019gaussian}]
If $\mathcal{M}$ is $\psi$-GDP, then it is $k \psi$-GDP for groups of size $k$.
\end{corollary}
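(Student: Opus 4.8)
The plan is to argue entirely within the trade-off function formalism, making no use of the explicit Gaussian law of $\mathcal{M}$, since the hypothesis supplies only that $\mathcal{M}$ is $\psi$-GDP. By the definition of GDP (Equation (6) in \cite{dong2019gaussian}), this hypothesis states that for \emph{every} pair of adjacent datasets $S \wr S'$ the trade-off function satisfies $T(\mathcal{M}(f(S)), \mathcal{M}(f(S'))) \geq G_\psi$, where $G_\psi = T(\mathcal{N}(0,1), \mathcal{N}(\psi,1))$.

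First I would reduce the group statement to a chaining statement. Any two databases $D$ and $\tilde{D}$ at Hamming distance $k$ can be joined by a sequence $D = D_0 \wr D_1 \wr \cdots \wr D_k = \tilde{D}$ in which each consecutive pair differs by a single individual. Applying the hypothesis to each link gives $T(\mathcal{M}(f(D_{i})), \mathcal{M}(f(D_{i+1}))) \geq G_\psi$ for $i = 0, \ldots, k-1$. The task then becomes to propagate these per-link guarantees into a single guarantee on the endpoints $\mathcal{M}(f(D_0))$ and $\mathcal{M}(f(D_k))$.

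To do so I would invoke the triangle-type composition property of trade-off functions: chaining the $k$ links yields a lower bound $T(\mathcal{M}(f(D_0)), \mathcal{M}(f(D_k))) \geq G_\psi^{\boxtimes k}$, where $\boxtimes$ denotes the group-privacy operation on trade-off functions obtained by iterating this composition. The decisive computation --- and the step I expect to be the main obstacle --- is to show that this operation closes within the Gaussian family with an additive parameter, that is $G_\psi^{\boxtimes k} = G_{k\psi}$. Here I would exploit the canonical representation $G_\psi = T(\mathcal{N}(0,1), \mathcal{N}(\psi,1))$ and the observation that the chain of unit-variance Gaussians with means $0, \psi, 2\psi, \ldots, k\psi$ lies along a single line, so that the Neyman--Pearson (likelihood-ratio) test between the endpoints $\mathcal{N}(0,1)$ and $\mathcal{N}(k\psi,1)$ is governed by the \emph{aggregate} mean separation $k\psi$ and hence has trade-off function exactly $G_{k\psi}$. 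This additivity of the mean separation is precisely the feature that singles out GDP: for a generic $f$-DP mechanism the analogous chaining would furnish only a strictly weaker bound.

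Having established $G_\psi^{\boxtimes k} = G_{k\psi}$, the chain inequality reads $T(\mathcal{M}(f(D_0)), \mathcal{M}(f(D_k))) \geq G_{k\psi}$, which by the definition of GDP (cf. Lemma \ref{psi_to_gdp}) is exactly the assertion that $\mathcal{M}$ is $k\psi$-GDP for groups of size $k$. As a consistency check on the direction of the inequality, note that $k\psi \geq \psi$ and $G_\mu$ is decreasing in $\mu$, so that the group guarantee $G_{k\psi}$ is correctly weaker than the single-individual guarantee $G_\psi$.
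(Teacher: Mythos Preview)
The paper does not supply its own proof of this corollary: it is recorded as a direct import from Theorem~2.14 of \cite{dong2019gaussian}, available once Lemma~\ref{psi_to_gdp} has identified the mechanism as $\psi$-GDP. Your proposal therefore goes beyond what the paper does, in that you sketch the argument behind Dong et al.'s theorem rather than merely citing it.

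That sketch is along the right lines, but the step you yourself flag as the main obstacle contains a directional slip. Your heuristic --- observe that the concrete chain $\mathcal{N}(0,1), \mathcal{N}(\psi,1), \ldots, \mathcal{N}(k\psi,1)$ has pairwise trade-offs $G_\psi$ and endpoint trade-off $G_{k\psi}$ --- yields, via the chaining inequality applied to \emph{this particular} chain, only that $G_{k\psi} = T\bigl(\mathcal{N}(0,1),\mathcal{N}(k\psi,1)\bigr) \geq G_\psi^{\boxtimes k}$. That is the wrong direction: to push the group bound through for an \emph{arbitrary} $\psi$-GDP mechanism you need $G_\psi^{\boxtimes k} \geq G_{k\psi}$. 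Establishing this requires actually computing the group-privacy operation on $G_\psi$, which in \cite{dong2019gaussian} amounts to checking the algebraic identity $1 - (1 - G_\psi)^{\circ k} = G_{k\psi}$ directly from the closed form $G_\psi(\alpha) = \Phi\bigl(\Phi^{-1}(1-\alpha) - \psi\bigr)$. The geometric picture of collinear Gaussian means is suggestive but does not by itself deliver this identity, since exhibiting one chain that attains a given endpoint trade-off does not certify that the abstract chaining bound is tight there.
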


\begin{corollary}[Composition, Corollary 3.3 in \cite{dong2019gaussian}]
Let $\mathcal{M}_i$ be a sequence of $\psi_i$-GDP GMs, $i \in [1 \dots n]$. Then their n-fold composition is $\sqrt{\psi_1^2 + \dots + \psi_n^2}$-GDP.
\end{corollary}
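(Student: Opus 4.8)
The plan is to obtain the claim as a direct consequence of Lemma~\ref{psi_to_gdp} together with the composition theorem for trade-off functions of \cite{dong2019gaussian}. First I would recall that the $n$-fold (adaptive) composition of the $\mathcal{M}_i$ is the mechanism whose output on a database $D$ is the tuple $\left( \mathcal{M}_1(f_1(D)), \dots, \mathcal{M}_n(f_n(D)) \right)$, and that the composition theorem for trade-off functions \cite{dong2019gaussian} states that the trade-off function of this composed mechanism on any adjacent pair $D \wr D'$ is the tensor product $\bigotimes_{i=1}^n T_i$ of the individual trade-off functions $T_i = T\left( \mathcal{M}_i(f_i(D)), \mathcal{M}_i(f_i(D')) \right)$.

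Next, since each $\mathcal{M}_i$ is a GM with sensitivity index $\psi_i$, Lemma~\ref{psi_to_gdp} gives $T_i \geq G_{\psi_i}$ on every adjacent pair. The tensor product of trade-off functions is monotone with respect to the pointwise order, so $\bigotimes_{i=1}^n T_i \geq \bigotimes_{i=1}^n G_{\psi_i}$. The remaining ingredient is the Gaussian tensorisation identity $\bigotimes_{i=1}^n G_{\psi_i} = G_{\sqrt{\psi_1^2 + \dots + \psi_n^2}}$, which is exactly Corollary~3.3 of \cite{dong2019gaussian} and, for Gaussians, is an exact identity rather than a limiting approximation. Chaining the two estimates, the trade-off function of the composition is bounded below by $G_{\sqrt{\psi_1^2 + \dots + \psi_n^2}}$, i.e. the composition is $\sqrt{\psi_1^2 + \dots + \psi_n^2}$-GDP.

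As a more self-contained alternative I would also sketch the direct argument: the composed mechanism compares the two $\left( \sum_i d_i \right)$-dimensional Gaussians $\mathcal{N}(v, \Sigma)$ and $\mathcal{N}(v', \Sigma)$ with common block-diagonal covariance $\Sigma = \operatorname{diag}\left( \sigma_1^2 \mathbf{I}_{d_1}, \dots, \sigma_n^2 \mathbf{I}_{d_n} \right)$ and mean gap $v - v' = \left( f_1(D) - f_1(D'), \dots, f_n(D) - f_n(D') \right)$. The trade-off function between two equal-covariance multivariate Gaussians is $G_\mu$ with $\mu = \left\Vert \Sigma^{-1/2}(v - v') \right\Vert_2 = \left( \sum_i \Vert f_i(D) - f_i(D') \Vert_2^2 / \sigma_i^2 \right)^{1/2} \leq \left( \sum_i \Delta_i^2 / \sigma_i^2 \right)^{1/2} = \sqrt{\sum_i \psi_i^2}$, after which the monotonicity of $\mu \mapsto G_\mu$ established before Lemma~\ref{psi_to_gdp}, combined with Lemma~\ref{psi_to_gdp} itself, yields the claim.

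The heavy lifting is imported from \cite{dong2019gaussian}, so the main point requiring care is the adaptive case, in which $f_i$ and $\sigma_i$ may depend on the earlier outputs $\mathcal{M}_1, \dots, \mathcal{M}_{i-1}$: the bound $\Vert f_i(D) - f_i(D') \Vert_2 \leq \Delta_i$ must then be read as holding uniformly over all realisations of those earlier outputs, and the tensorisation identity must be applied in its adaptive form. Working throughout with the worst-case mean gap $\Delta_i$ in each coordinate, as the paper does, is precisely what makes this uniformity hold automatically.
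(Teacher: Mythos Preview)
Your proposal is correct and aligns with the paper's own treatment: the paper does not give a separate proof but simply states that this corollary is a direct consequence of Lemma~\ref{psi_to_gdp} combined with Corollary~3.3 of \cite{dong2019gaussian}. Your first paragraph is exactly that argument spelled out, and the self-contained alternative and the remark on adaptivity are helpful elaborations that go beyond what the paper provides.
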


Moreover, $\psi$-based GM interpretations are amenable to the \textit{subsampling amplification} and \textit{central-limit-theorem-type} phenomenon arising in \textit{f-DP}. This allows the privacy analysis of GMs iterated over many steps on (secret) subsamples of a database, such as DP-SGD \cite{abadi2016deep}:

\begin{corollary}[DP-SGD analysis, Corollary 5.4 in \cite{dong2019gaussian}]
Let $\mathcal{D}$ be an instance of the DP-SGD algorithm with Gaussian noise magnitude $\sigma$, executed on a database of cardinality $n$ for $T$ iterations, where $n \rightarrow +\infty$ and  $T \rightarrow +\infty$. If secret subsamples of the database are drawn uniformly at random with sampling rate $r$ at every iteration, then $\mathcal{D}$ is $\psi$-GDP with $\psi$ given by:
\begin{equation}
    \psi = s \sqrt{2} \, \sqrt{\exp \left(\frac{1}{\sigma^2} \right) \Phi \left(\frac{3}{2\sigma}\right ) + 3\Phi\left(-\frac{1}{2\sigma}\right ) -2}
\end{equation}
where $r\sqrt{T} \rightarrow s$.
\end{corollary}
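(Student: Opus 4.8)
The plan is to reduce the statement to the corresponding \textit{central-limit-type} result for Gaussian DP in \cite{dong2019gaussian} (their Corollary 5.4), using Lemma \ref{psi_to_gdp} as the bridge in both directions. By Lemma \ref{psi_to_gdp}, a Gaussian mechanism with sensitivity index $\psi$ is $\mu$-GDP exactly when $\psi \leq \mu$; hence the tightest GDP parameter of a GM equals its sensitivity index, and conversely any asymptotic bound on the GDP parameter transfers verbatim to a statement about $\psi$. There is therefore no new analytic content to produce: the work consists entirely of correctly instantiating the known \textit{f-DP} machinery for this composed mechanism.

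First I would identify the per-iteration mechanism. In DP-SGD each per-example gradient is clipped so that its $L_2$-sensitivity is normalised to $1$, after which $\mathcal{N}(0,\sigma^2\mathbf{I})$ noise is added; thus a single step is a Gaussian mechanism with sensitivity index $\psi_0 = 1/\sigma$, i.e. its trade-off function is $G_{1/\sigma}$ by Lemma \ref{psi_to_gdp}. Next I would express $\mathcal{D}$ as the $T$-fold composition of the $r$-subsampled version of this single-step mechanism. Then I would invoke the two structural facts of \textit{f-DP} from \cite{dong2019gaussian}: the action of the subsampling operator on a trade-off function, and the central limit theorem for compositions of (subsampled) trade-off functions. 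In the regime $n \to \infty$, $T \to \infty$ with $r\sqrt{T} \to s$, these combine to show that the composed trade-off function converges uniformly to $G_\mu$, where $\mu$ is obtained by specialising the $\sigma$-dependent functional appearing in the central limit theorem at the per-step parameter $1/\sigma$; this yields $\mu = s\sqrt{2}\,\sqrt{\exp(1/\sigma^2)\Phi(3/(2\sigma)) + 3\Phi(-1/(2\sigma)) - 2}$.

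Finally, applying Lemma \ref{psi_to_gdp} in reverse, the limiting $\mu$-GDP guarantee is exactly $\psi$-GDP with $\psi = \mu$, which is the claimed identity. The only delicate point — more bookkeeping than genuine obstacle — is tracking constants: verifying that the clipping normalisation is consistent with how \cite{dong2019gaussian} parametrise DP-SGD (so that the per-step index is indeed $1/\sigma$), and carefully carrying the $\sqrt{2}$ and the argument shifts $3/(2\sigma)$ and $-1/(2\sigma)$ through the Taylor expansion underlying the central limit theorem so that the functional matches exactly. No other part of the argument is expected to present difficulty.
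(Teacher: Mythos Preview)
Your proposal is correct and aligns with the paper: the paper provides no proof of its own here but quotes the statement directly as Corollary 5.4 of \cite{dong2019gaussian}, so deferring to that reference (with Lemma \ref{psi_to_gdp} supplying the per-step trade-off function $G_{1/\sigma}$) is exactly what is required. One minor point: the final ``reverse'' application of Lemma \ref{psi_to_gdp} is unnecessary, since $\mathcal{D}$ is not a single Gaussian mechanism and the conclusion ``$\psi$-GDP'' is simply $\mu$-GDP with the limiting parameter renamed to $\psi$.
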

Of note, this formulation is asymptotic, yet reasonable for large databases and large numbers of iterations, as are common in deep learning, and allows one to avoid cumbersome composition computations otherwise required for the analysis of subsampled, iterated GMs by \textit{f-DP}. Recently, \cite{Asoodeh2020-do} showed that for a specific range of $\sigma$ values, RDP composition can be more tight than \textit{f-DP}. Moreover, newer techniques \cite{zhu2021optimal} utilising the \textit{characteristic function} for facilitated accounting have been proposed and await investigation beyond the proof-of-principle, as well as integration into DP machine learning libraries.\par

\subsection{Relating $\psi$-DP and RDP}
RDP \cite{mironov2017renyi} was proposed as a natural relaxation of DP tailored to the specific properties of the GM. We choose to study the relationship between the sensitivity index and RDP over other divergence-based DP relaxation such as concentrated DP \cite{dwork2016concentrated} or zero-concentrated DP \cite{zcdp}, as it enjoys the greatest popularity among machine learning practitioners, representing the basis of privacy accounting of both major DP machine learning libraries (\textit{Opacus} and \textit{TensorFlow Privacy}). Fortunately, RDP can easily be expressed using the sensitivity index $\psi$, facilitating the conversion between our $\psi$-based characterisation of the GM and RDP:

\begin{lemma}
Let $\mathcal{M}$ be a GM with sensitivity index $\psi$ on the query function $f$ over adjacent databases $D$ and $D'$. Then, $\mathcal{M}$ satisfies $(\alpha, \rho)$-RDP, $\alpha \geq 1$ if and only if it also satisfies $(\alpha, \frac{\alpha}{2}\psi^2)$-RDP.
\label{rdp_lemma}
\end{lemma}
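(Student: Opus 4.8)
The plan is to compute the Rényi divergence between $\mathcal{M}(f(D))$ and $\mathcal{M}(f(D'))$ in closed form and then read off the tight value of $\rho$. I would begin from the moment-generating-function characterisation of the Rényi divergence: for $\alpha > 1$,
\[
    D_{\alpha}\!\left( P \parallel Q \right) = \frac{1}{\alpha - 1} \log \mathbb{E}_{P}\!\left[ \left( \frac{dP}{dQ} \right)^{\alpha - 1} \right] = \frac{1}{\alpha - 1} \log \mathbb{E}\!\left[ e^{(\alpha - 1)\,\Omega_{\mathcal{M}(f(D))}} \right],
\]
where the second equality uses that the privacy loss random variable is precisely the log-likelihood ratio appearing on the right-hand side of equation (\ref{priv_loss_rv}), with $P = \mathcal{M}(f(D))$ and $Q = \mathcal{M}(f(D'))$.

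Next I would substitute the fact, already established above, that for the GM the privacy loss random variable is itself normal, $\Omega_{\mathcal{M}(f(D))} \sim \mathcal{N}(\tfrac{1}{2}\psi^2, \psi^2)$, in the worst case $\Vert f(D) - f(D') \Vert_2 = \Delta$. Plugging the Gaussian MGF $\mathbb{E}[e^{t\Omega}] = \exp\!\big( \tfrac{1}{2} t \psi^2 + \tfrac{1}{2} t^2 \psi^2 \big)$ with $t = \alpha - 1$, the outer $\log$ and the prefactor $\tfrac{1}{\alpha - 1}$ collapse the expression to $\tfrac{1}{2}\psi^2 + \tfrac{1}{2}(\alpha - 1)\psi^2 = \tfrac{\alpha}{2}\psi^2$. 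The boundary case $\alpha = 1$ follows by continuity, since $\lim_{\alpha \to 1} \tfrac{\alpha}{2}\psi^2 = \tfrac{1}{2}\psi^2$ equals the Kullback--Leibler divergence between the two equivariant Gaussians (which can also be checked by a direct one-line computation).

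For the \say{if and only if} I would then argue tightness. For an arbitrary adjacent pair the mean separation obeys $\Vert f(D) - f(D') \Vert_2 \leq \Delta$, and because the Rényi divergence between equivariant Gaussians is monotonically increasing in the mean separation, $D_{\alpha}(\mathcal{M}(f(D)) \parallel \mathcal{M}(f(D'))) \leq \tfrac{\alpha}{2}\psi^2$, with equality attained at the worst-case pair; since only $\Vert f(D) - f(D') \Vert_2$ enters, the bound is the same in the reversed direction, which takes care of the asymmetry of RDP. Hence $\tfrac{\alpha}{2}\psi^2$ is the smallest $\rho$ for which $\mathcal{M}$ is $(\alpha, \rho)$-RDP, so $\mathcal{M}$ is $(\alpha, \rho)$-RDP if and only if $\rho \geq \tfrac{\alpha}{2}\psi^2$, i.e. if and only if it is $(\alpha, \tfrac{\alpha}{2}\psi^2)$-RDP. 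I do not expect a genuine obstacle here; the only point requiring care is justifying that the worst-case separation $\Vert f(D) - f(D') \Vert_2 = \Delta$ is actually attained (or approached arbitrarily closely), so that the computation yields an \emph{equality} for the tight $\rho$ rather than merely an upper bound --- this is exactly what upgrades the implication to an equivalence.
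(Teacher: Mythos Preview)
Your argument is correct, but it follows a different route from the paper. The paper quotes the closed-form expression for the R\'enyi divergence between two Gaussians with arbitrary means and covariances (Corollary~\ref{renyi_divergence}) and substitutes $\mu_i=f(D)$, $\mu_j=f(D')$, $\sigma_i=\sigma_j=\sigma$; the two logarithmic terms vanish and the remaining term simplifies to $\tfrac{\alpha}{2}\Vert f(D)-f(D')\Vert_2^2/\sigma^2 \leq \tfrac{\alpha}{2}\psi^2$. You instead go through the moment-generating function of the privacy loss random variable, exploiting that the paper has already shown $\Omega_{\mathcal{M}(f(D))}\sim\mathcal{N}(\tfrac{1}{2}\psi^2,\psi^2)$; this is a nice reuse of earlier machinery and arguably ties the RDP section more tightly to Section~3.1, whereas the paper's substitution is more direct once one is willing to import the general R\'enyi formula. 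Your treatment of the \say{if and only if} via tightness at the worst-case pair is also more explicit than the paper's, which dispatches the reverse direction in one sentence.
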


We will rely on the following fact about the \textit{Rényi} divergence from one Gaussian probability distribution to another:

\begin{corollary}
Let $P_i \coloneqq \mathcal{N}(\mu_i, \sigma_i\mathbf{I})$ and $P_j \coloneqq \mathcal{N}(\mu_j, \sigma_j\mathbf{I})$ be two Gaussian probability distributions. Then, the \textit{Rényi} divergence of order $\alpha$ from $P_i$ to $P_j$ is given by:
\begin{align*}
	& D_{\alpha}(P_i \parallel P_j) = \\ &= \log \left (\frac{\sigma_j}{\sigma_i} \right) + \frac{1}{2(\alpha-1)} 
	\log \left( \frac{\sigma_j^2}{\alpha\sigma_j^2+(1-\alpha)\sigma_i^2} \right) 
	+ \\ &+
	\frac{1}{2}\frac{\alpha(\bm{\mu}_i-\bm{\mu}_j)(\bm{\mu}_i-\bm{\mu}_j)^T}{\alpha\sigma_j^2+(1-\alpha)\sigma_i^2}
\end{align*}	
\label{renyi_divergence}
\end{corollary}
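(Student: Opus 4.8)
The plan is to work directly from the definition of the Rényi divergence of order $\alpha$,
\begin{equation*}
D_\alpha(P_i \parallel P_j) = \frac{1}{\alpha-1} \log \int_{\mathbb{R}^d} p_i(x)^\alpha \, p_j(x)^{1-\alpha} \, dx,
\end{equation*}
and to substitute the Gaussian densities $p_i(x) \propto \exp\!\left(-\Vert x - \bm{\mu}_i \Vert_2^2 / (2\sigma_i^2)\right)$ and $p_j(x) \propto \exp\!\left(-\Vert x - \bm{\mu}_j \Vert_2^2 / (2\sigma_j^2)\right)$. First I would pull the normalisation constants out of the integral, collecting a prefactor of the form $(2\pi)^{-d/2}\sigma_i^{-\alpha d}\sigma_j^{-(1-\alpha)d}$, so that only the exponential of a quadratic in $x$ remains under the integral sign.

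The central step is to combine the two quadratic exponents into a single one by completing the square. Writing $u = \alpha/\sigma_i^2$ and $v = (1-\alpha)/\sigma_j^2$, the combined exponent is $u\Vert x - \bm{\mu}_i\Vert_2^2 + v\Vert x - \bm{\mu}_j\Vert_2^2$, which I would rewrite as $(u+v)\Vert x - \bar{\bm{\mu}}\Vert_2^2 + K$ with $\bar{\bm{\mu}} = (u\bm{\mu}_i + v\bm{\mu}_j)/(u+v)$ and a remainder $K$ independent of $x$. The Gaussian integral of $\exp\!\left(-\tfrac{1}{2}(u+v)\Vert x - \bar{\bm{\mu}}\Vert_2^2\right)$ then evaluates to $\left(2\pi/(u+v)\right)^{d/2}$, the $(2\pi)$ factors cancel against the prefactor, and one is left with a closed form for the integral in terms of $u$, $v$ and $K$.

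The main obstacle, and the only genuinely non-trivial algebra, is showing that the $x$-independent remainder $K$ collapses to the clean mean-difference term. Writing $b = u\bm{\mu}_i + v\bm{\mu}_j$ and $c = u\Vert\bm{\mu}_i\Vert_2^2 + v\Vert\bm{\mu}_j\Vert_2^2$, one has $K = c - \Vert b\Vert_2^2/(u+v)$; the key cancellation, after expanding the cross terms, is that $(u+v)c - \Vert b\Vert_2^2$ reduces to $uv\,\Vert\bm{\mu}_i - \bm{\mu}_j\Vert_2^2$, so that $K = \frac{uv}{u+v}\Vert\bm{\mu}_i - \bm{\mu}_j\Vert_2^2$. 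Substituting back the values of $u$ and $v$ converts $\frac{uv}{u+v}$ into $\frac{\alpha(1-\alpha)}{\alpha\sigma_j^2 + (1-\alpha)\sigma_i^2}$; after the final division by $\alpha - 1$ the factor $(1-\alpha)/(\alpha-1) = -1$ flips the sign, producing exactly the $+\tfrac{1}{2}\alpha/(\alpha\sigma_j^2 + (1-\alpha)\sigma_i^2)$ coefficient multiplying $\Vert\bm{\mu}_i - \bm{\mu}_j\Vert_2^2$ in the statement's third summand.

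The remaining work is bookkeeping: I would take the logarithm of the evaluated integral, divide by $\alpha-1$, and collect the $\log\sigma_i$, $\log\sigma_j$ and $\log(u+v)$ contributions. Using $u+v = (\alpha\sigma_j^2 + (1-\alpha)\sigma_i^2)/(\sigma_i^2\sigma_j^2)$, the $\log(u+v)$ term splits into $-\frac{1}{2(\alpha-1)}\log(\alpha\sigma_j^2+(1-\alpha)\sigma_i^2)$ plus $\log\sigma_i$ and $\log\sigma_j$ contributions, which recombine with the prefactor logarithms to give precisely $\log(\sigma_j/\sigma_i)$ together with $\frac{1}{2(\alpha-1)}\log(\sigma_j^2/(\alpha\sigma_j^2+(1-\alpha)\sigma_i^2))$, matching the claimed formula. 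I note in passing that the statement corresponds to the scalar case; for a general isotropic covariance in $d$ dimensions the two logarithmic summands simply acquire a factor $d$, while the mean-difference term is already dimension-free.
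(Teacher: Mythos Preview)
Your derivation is correct and is the standard direct computation: complete the square in the exponent, evaluate the resulting Gaussian integral, and collect the logarithmic and quadratic terms. The identity $(u+v)c - \Vert b\Vert_2^2 = uv\,\Vert\bm{\mu}_i-\bm{\mu}_j\Vert_2^2$ is indeed the only nontrivial step, and your handling of it is fine. Your closing remark about the dimension is also apt: the formula in the statement is the $d=1$ specialisation, and the general isotropic case carries an extra factor of $d$ on the two logarithmic summands.

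That said, there is nothing to compare against: the paper does not prove this corollary at all. It is simply recorded as a known closed-form expression for the R\'enyi divergence between two (isotropic) Gaussians and then invoked to establish the subsequent lemma on $(\alpha,\frac{\alpha}{2}\psi^2)$-RDP. So your proposal supplies a proof where the paper gives none; it is correct, but strictly more than what the paper itself contains.
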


We can now prove Lemma \ref{rdp_lemma}:

\begin{proof}
We recall that the density functions of $\mathcal{M}$ over $f(D)$ and $f(D')$ follow Gaussian distributions with means $f(D)$ and $f(D')$, respectively and common covariance matrices $\sigma \mathbf{I}$. Substituting in Corollary (\ref{renyi_divergence}) yields:
\begin{align*}
    & D_{\alpha} \left( \mathcal{N}(f(D), \sigma \mathbf{I}) \parallel \mathcal{N}(f(D'), \sigma \mathbf{I}) \right) =  \log \left( \frac{\sigma}{\sigma} \right) + \\ &+ \frac{1}{2(\alpha-1)} \log \left( \frac{\sigma^2}{\alpha\sigma^2+(1-\alpha)\sigma^2} \right) + \\ & + \frac{1}{2}\frac{\alpha(f(D) - f(D'))(f(D) - f(D'))^T}{\alpha\sigma^2+(1-\alpha)\sigma^2} = \\ &=
    \frac{1}{2}\frac{\alpha(f(D) - f(D'))(f(D) - f(D'))^T}{\sigma^2} = \\ &=
    \frac{\alpha}{2} \frac{\Vert f(D) - f(D') \Vert_2^2}{\sigma^2} \leq \frac{\alpha}{2}\frac{\Delta^2}{\sigma^2} = \frac{\alpha}{2}\psi^2
\end{align*}
Thus, for $\mathcal{M}$ to satisfy $(\alpha, \rho)$-RDP it is sufficient for $\rho = \frac{\alpha}{2}\psi^2$. The inverse condition follows by the same argument.
\end{proof}

We note that this result represents a generalisation of Proposition 7 and Corollary 3 of \cite{mironov2017renyi}. Evidently, the translation between $\psi$ and $\rho$ also allows to make use of the RDP composition theorem, including its subsampled variants, for which we refer to \cite{renyi_poisson, renyi_sgm}.

\section{Optimal conversion}
In this section, we investigate the optimal conversion strategy between $\psi$-DP and $(\varepsilon, \delta)$-DP under a fixed value of $\delta$. This question bears elaboration seeing as $(\varepsilon, \delta)$-DP is \textendash in our view \textendash widely regarded as the \textit{canonical} version of DP and stakeholders may be more accustomed to it. Moreover, $\delta$ is usually viewed as a \textit{hyperparameter} whose choice depends on the size of the database in question. Our results admit two different strategies to achieve conversion: (1) Directly converting from $\psi$-DP using Lemma \ref{roc_to_delta} or (2) converting to $(\varepsilon, \delta)$-DP by taking a \say{detour} via \textit{f-DP} or RDP. We note that method (1) and the conversion via \textit{f-DP} are equivalent in the sense that the conversion between $\psi$-DP and $\psi$-GDP are immediate and rely only on a single parameter. The conversion via RDP is more involved, as it relies on the optimal choice of a second parameter ($\alpha$). Moreover, the \say{standard} conversion via RDP as presented in \cite{mironov2017renyi} is \textit{lossy}, with a tight(-er) conversion proposed in \cite{balle2020hypothesis} and (nearly identically) in \cite{Asoodeh2020-do}. We will investigate all three techniques, and begin by briefly introducing the conversion rules used in our experiments:

\begin{figure}[]
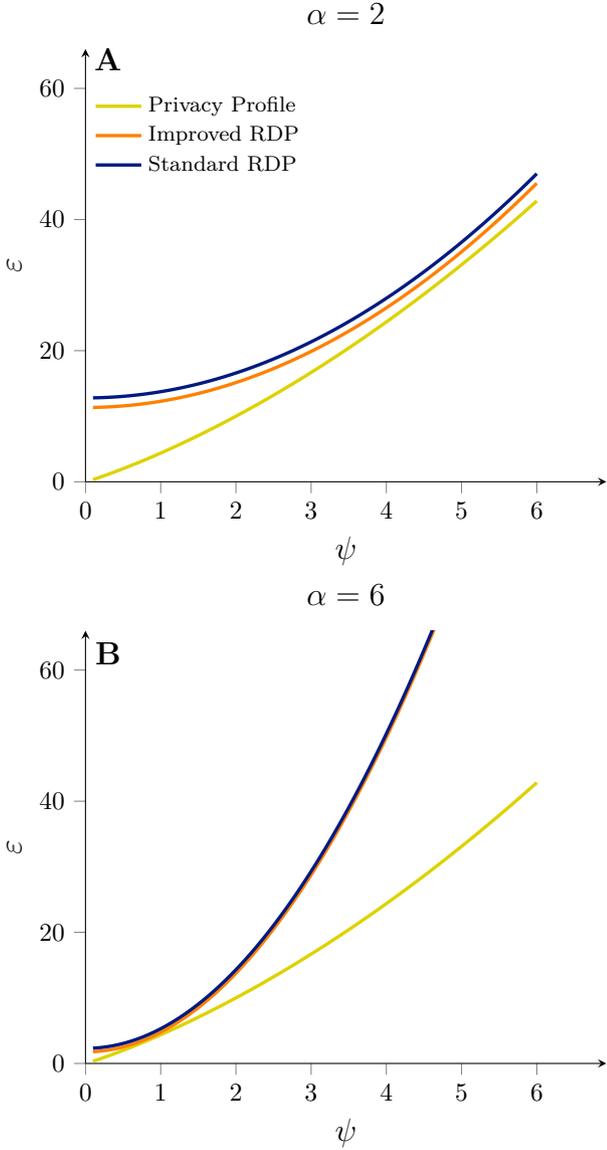

        \input{Figures/figure_6}\\
        \input{Figures/figure_7}
\caption{Exemplary numerical experiment results showcasing three strategies for converting between $\psi$ and $(\varepsilon, \delta(\varepsilon))$-DP. The privacy profile-based conversion (yellow curve) consistently provides the best conversion compared to the Standard RDP (blue curve) and Improved RDP (orange curve) methods. Moreover, the two RDP-based conversion techniques are $\alpha$-dependent. In the high-$\psi$ regime (\textbf{A}), RDP conversions utilising lower $\alpha$-values (here: optimal $\alpha=2$) more closely approximate the privacy profile-based conversion. In the low-$\psi$ regime (\textbf{B}), higher $\alpha$ values (here: $\alpha=6$) lead to a better approximation, which \textendash in this case \textendash almost perfectly matches the privacy profile-based conversion within a narrow interval. Observe also the increase in error when $\psi \rightarrow 0$, which is due to the different definition ranges of the RDP-based vs. the privacy profile-based conversion techniques.}
\label{numerical_experiments}
\end{figure}

\begin{corollary}[Standard RDP conversion \cite{mironov2017renyi}]
If a mechanism $\mathcal{M}$ satisfies $(\alpha, \rho)-RDP$, it also satisfies $\left(\rho +\frac{\log\left(\frac{1}{\delta}\right)}{\alpha-1}, \delta \right)$-DP for $\delta \in (0,1)$. Equivalently, $\mathcal{M}$ satisfies:
\begin{equation}
    \left(\frac{\alpha}{2}\psi^2 +\frac{\log\left(\frac{1}{\delta}\right)}{\alpha-1}, \delta \right)\mbox{-DP}
\end{equation}
\end{corollary}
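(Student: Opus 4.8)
The plan is to establish the generic RDP-to-$(\varepsilon,\delta)$ conversion first and then specialise it via Lemma \ref{rdp_lemma}. First I would recall that $(\alpha,\rho)$-RDP is, by the definition of the Rényi divergence, equivalent to the moment bound
\[
\mathbb{E}_{O \sim \mathcal{M}(f(D))}\!\left[ e^{(\alpha-1)\,\Omega_{\mathcal{M}(f(D))}} \right] \leq e^{(\alpha-1)\rho},
\]
since $D_{\alpha}(\mathcal{M}(f(D)) \parallel \mathcal{M}(f(D'))) = \tfrac{1}{\alpha-1}\log \mathbb{E}\big[e^{(\alpha-1)\Omega_{\mathcal{M}(f(D))}}\big]$ and the exponent is precisely the log-likelihood ratio appearing on the right-hand side of equation (\ref{priv_loss_rv}).

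Next I would apply a Chernoff-type argument to the privacy loss random variable. For any $\varepsilon \geq 0$, monotonicity of $x \mapsto e^{(\alpha-1)x}$ (using $\alpha > 1$) together with Markov's inequality gives
\[
\mathbb{P}\!\left( \Omega_{\mathcal{M}(f(D))} \geq \varepsilon \right) \;\leq\; e^{-(\alpha-1)\varepsilon}\, \mathbb{E}\!\left[ e^{(\alpha-1)\Omega_{\mathcal{M}(f(D))}} \right] \;\leq\; e^{(\alpha-1)(\rho-\varepsilon)}.
\]
Setting the right-hand side equal to $\delta$ and solving for $\varepsilon$ yields $\varepsilon = \rho + \tfrac{\log(1/\delta)}{\alpha-1}$, so that with this choice $\mathbb{P}(\Omega_{\mathcal{M}(f(D))} \geq \varepsilon) \leq \delta$. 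The symmetric tail $\mathbb{P}(\Omega_{\mathcal{M}(f(D'))} \leq -\varepsilon) \leq \delta$ follows identically by swapping the roles of $D$ and $D'$ (and in the Gaussian case is automatic, cf.\ the remark in the proof of Lemma \ref{lemma_1}). Invoking the standard fact that a bound of $\delta$ on the upper tail of the privacy loss random variable at level $\varepsilon$ implies $(\varepsilon,\delta)$-DP in the sense of Definition \ref{eps_delta_dp_def} closes the first part.

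Finally, the \say{equivalently} clause is immediate: Lemma \ref{rdp_lemma} states that the GM with sensitivity index $\psi$ satisfies $(\alpha, \tfrac{\alpha}{2}\psi^2)$-RDP, so substituting $\rho = \tfrac{\alpha}{2}\psi^2$ into $\varepsilon = \rho + \tfrac{\log(1/\delta)}{\alpha-1}$ gives the stated pair.

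I expect the only delicate point to be the last implication in the first part, namely passing from a tail bound on $\Omega_{\mathcal{M}(f(D))}$ to the set-based inequality of Definition \ref{eps_delta_dp_def}: this is a known but slightly technical lemma (one decomposes an arbitrary measurable event into the region where the privacy loss is at most $\varepsilon$ and its complement, bounds the former by the likelihood-ratio factor $e^{\varepsilon}$ and the latter by $\delta$), and I would either cite it from \cite{dpbook, mironov2017renyi} or reproduce the short argument. Everything else is a routine Chernoff bound combined with the already-proven Lemma \ref{rdp_lemma}.
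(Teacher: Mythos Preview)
Your argument is correct and is essentially the original proof of Proposition~3 in \cite{mironov2017renyi}: rewrite $(\alpha,\rho)$-RDP as a moment bound on the privacy loss random variable, apply Markov's inequality to obtain the tail estimate, solve for $\varepsilon$, and then pass from the tail bound to Definition~\ref{eps_delta_dp_def} via the standard decomposition you describe. The specialisation to $\rho=\tfrac{\alpha}{2}\psi^2$ through Lemma~\ref{rdp_lemma} is exactly right.

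Note, however, that the paper itself does \emph{not} prove this corollary: it is stated as an imported result with the citation to \cite{mironov2017renyi}, and the \say{equivalently} clause is left implicit as an immediate substitution from Lemma~\ref{rdp_lemma}. So there is no paper proof to compare against; what you have written is precisely the argument the cited reference contains, and it would serve perfectly well if the paper wished to be self-contained here.
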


\begin{corollary}[Improved RDP conversion \cite{balle2020hypothesis}]
If a mechanism $\mathcal{M}$ satisfies $(\alpha, \rho)-RDP$, it also satisfies $\left(\rho+\log\left(\frac{\alpha-1}{\alpha}\right)-\frac{\log (\delta)+\log (\alpha)}{\alpha-1},\delta \right)$-DP for $\delta \in (0,1)$. Equivalently, $\mathcal{M}$ satisfies:
\begin{equation}
    \left(\frac{\alpha}{2}\psi^2+\log\left(\frac{\alpha-1}{\alpha}\right)-\frac{\log (\delta)+\log (\alpha)}{\alpha-1},\delta \right)\mbox{-DP}
\end{equation}

\end{corollary}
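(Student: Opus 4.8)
The plan is to obtain the statement in two moves: first establish (or simply import from \cite{balle2020hypothesis}) the improved $(\alpha,\rho)$-RDP $\to$ $(\varepsilon,\delta)$-DP conversion, and then specialise it to the Gaussian mechanism via Lemma~\ref{rdp_lemma}, whose role is to supply the value of $\rho$.

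For the first move I would work with the hockey-stick divergence $E_{e^\varepsilon}(P\|Q) = \sup_{S}\left(P(S) - e^\varepsilon Q(S)\right)$, which is exactly the smallest $\delta$ for which $P$ and $Q$ are $(\varepsilon,\delta)$-indistinguishable; here $P = \mathcal{M}(f(D))$ and $Q = \mathcal{M}(f(D'))$. Writing $Z = dP/dQ$ for the privacy-loss likelihood ratio, one has $E_{e^\varepsilon}(P\|Q) \le \mathbb{E}_Q[(Z - e^\varepsilon)_+]$. The key elementary ingredient is the pointwise inequality $(z - e^\varepsilon)_+ \le \frac{1}{\alpha-1}\left(\frac{\alpha-1}{\alpha}\right)^{\alpha} e^{(1-\alpha)\varepsilon}\, z^{\alpha}$ for all $z \ge 0$ and $\alpha > 1$, which I would verify by a one-variable calculus argument: the ratio $(z-e^\varepsilon)_+/z^{\alpha}$ is maximised at $z^\star = \tfrac{\alpha}{\alpha-1}e^\varepsilon$, where it equals the stated constant. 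Taking expectations under $Q$ and using the RDP hypothesis in the form $\mathbb{E}_Q[Z^\alpha] \le e^{(\alpha-1)\rho}$ yields $\delta \le \frac{1}{\alpha-1}\left(\frac{\alpha-1}{\alpha}\right)^{\alpha} e^{(\alpha-1)(\rho-\varepsilon)}$. Solving this for $\varepsilon$ as a function of $\delta$ and tidying the logarithms (using the identity $\alpha\log\frac{\alpha-1}{\alpha} - \log(\alpha-1) = (\alpha-1)\log\frac{\alpha-1}{\alpha} - \log\alpha$) gives precisely $\varepsilon = \rho + \log\left(\frac{\alpha-1}{\alpha}\right) - \frac{\log\delta + \log\alpha}{\alpha-1}$, the first displayed bound.

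For the second move I would invoke Lemma~\ref{rdp_lemma}: a Gaussian mechanism with sensitivity index $\psi$ satisfies $(\alpha, \tfrac{\alpha}{2}\psi^2)$-RDP for every $\alpha > 1$. Substituting $\rho = \tfrac{\alpha}{2}\psi^2$ into the conversion bound immediately produces the ``Equivalently'' clause, namely $\left(\tfrac{\alpha}{2}\psi^2 + \log\left(\frac{\alpha-1}{\alpha}\right) - \frac{\log\delta + \log\alpha}{\alpha-1},\, \delta\right)$-DP. Here the only subtlety is to note that $\rho = \tfrac{\alpha}{2}\psi^2$ is admissible for \emph{every} admissible order $\alpha$ (this is exactly what Lemma~\ref{rdp_lemma} gives, via Corollary~\ref{renyi_divergence} and the worst-case choice $\Vert f(D)-f(D')\Vert_2 = \Delta$), so the substitution remains valid no matter which $\alpha$ one later optimises over in the conversion.

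I expect the only non-routine steps to be the pointwise bound $(z - e^\varepsilon)_+ \le \frac{1}{\alpha-1}\left(\frac{\alpha-1}{\alpha}\right)^{\alpha} e^{(1-\alpha)\varepsilon}\, z^{\alpha}$ and the subsequent algebraic rearrangement of the resulting inequality into the stated closed form; everything else is either a direct citation of \cite{balle2020hypothesis} or a one-line substitution from Lemma~\ref{rdp_lemma}. If one is content to cite rather than re-derive Balle et al.'s bound, the entire proof of the corollary reduces to that single substitution.
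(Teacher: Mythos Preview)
Your proposal is correct and matches the paper's treatment: the paper states this corollary purely as a cited result from \cite{balle2020hypothesis} with no proof of the conversion itself, and the ``Equivalently'' clause is obtained by the one-line substitution $\rho=\tfrac{\alpha}{2}\psi^2$ from Lemma~\ref{rdp_lemma}, exactly as you outline. Your optional self-contained derivation via the hockey-stick divergence and the pointwise bound $(z-e^{\varepsilon})_+\le \tfrac{1}{\alpha-1}\bigl(\tfrac{\alpha-1}{\alpha}\bigr)^{\alpha}e^{(1-\alpha)\varepsilon}z^{\alpha}$ is correct and goes beyond what the paper does, but it is not needed to reproduce the paper's argument.
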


Lastly, due the relationship between $\psi$ and $\delta$ proven in Lemma \ref{roc_to_delta}, we can leverage $\mathcal{M}$'s \textit{privacy profile} to convert between $\psi$-based descriptions of the GM and $(\varepsilon, \delta)$-DP. We will use the following fact from \cite{balle2018improving}, which was foreshadowed in Lemma \ref{lemma_1}:

\begin{corollary}[Analytic Gaussian Mechanism, \cite{balle2018improving}]
$\mathcal{M}$ preserves $(\varepsilon, \delta(\varepsilon))$-DP if and only if the following holds $\forall \, \varepsilon > 0, \delta \in [0,1]$:
\begin{align*}
\delta(\varepsilon) &\geq \Phi \left (\frac{\Delta}{2 \sigma} - \frac{\varepsilon \sigma}{\Delta} \right) - e^{\varepsilon} \Phi \left (-\frac{\Delta}{2 \sigma} - \frac{\varepsilon \sigma}{\Delta} \right) = \\ &=
\Phi \left (\frac{1}{2}\psi - \frac{1}{\psi}\varepsilon \right) - e^{\varepsilon} \Phi \left (-\frac{1}{2}\psi - \frac{1}{\psi}\varepsilon \right)
\end{align*}
\end{corollary}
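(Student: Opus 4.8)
\section*{Proof proposal for the Analytic Gaussian Mechanism corollary}

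The plan is to establish the claim by identifying the tight value of $\delta$ as a function of $\varepsilon$ — the \emph{privacy profile}, equivalently the hockey-stick divergence — and then evaluating it explicitly using the Gaussianity of the privacy loss random variable established above. Recall first that Definition \ref{eps_delta_dp_def} is equivalent to demanding, for every measurable $\mathcal{S}$, that $\mathbb{P}(\mathcal{M}(f(D)) \in \mathcal{S}) - e^{\varepsilon}\,\mathbb{P}(\mathcal{M}(f(D')) \in \mathcal{S}) \leq \delta$. Hence $\mathcal{M}$ is $(\varepsilon, \delta(\varepsilon))$-DP exactly when $\delta(\varepsilon) \geq \sup_{\mathcal{S}}\big[\mathbb{P}(\mathcal{M}(f(D)) \in \mathcal{S}) - e^{\varepsilon}\,\mathbb{P}(\mathcal{M}(f(D')) \in \mathcal{S})\big]$, and the problem reduces to computing this supremum for the worst-case adjacent pair, i.e.\ when $\Vert f(D) - f(D')\Vert_2 = \Delta$.

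First I would show that the supremum is attained by the likelihood-ratio superlevel set $\mathcal{S}^{\star} = \{O : \Omega_{\mathcal{M}(f(D))}(O) \geq \varepsilon\}$, the event on which the privacy loss of equation (\ref{priv_loss_rv}) exceeds $\varepsilon$. This is a Neyman--Pearson-type argument: the integrand $P_{f(D)}(O) - e^{\varepsilon} P_{f(D')}(O)$ is nonnegative precisely on $\mathcal{S}^{\star}$, so neither adding a point outside $\mathcal{S}^{\star}$ nor removing a point of $\mathcal{S}^{\star}$ can increase the difference. Consequently $\delta(\varepsilon) = \mathbb{P}(\mathcal{M}(f(D)) \in \mathcal{S}^{\star}) - e^{\varepsilon}\,\mathbb{P}(\mathcal{M}(f(D')) \in \mathcal{S}^{\star})$.

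It then remains to evaluate the two probabilities. The first, $\mathbb{P}_{O \sim \mathcal{M}(f(D))}(\Omega_{\mathcal{M}(f(D))} \geq \varepsilon)$, is exactly the quantity computed in Lemma \ref{lemma_1}, namely $\Phi(\tfrac12\psi - \tfrac1\psi \varepsilon)$, which equals $\Phi(\tfrac{\Delta}{2\sigma} - \tfrac{\varepsilon\sigma}{\Delta})$ by Definition \ref{psi_def}. For the second I would note that, because two Gaussians with common covariance have a log-likelihood ratio that is an affine function of the observation, changing the law of $O$ from $\mathcal{N}(f(D), \sigma^2 \mathbf{I})$ to $\mathcal{N}(f(D'), \sigma^2 \mathbf{I})$ simply negates the mean of that ratio, so under $\mathcal{M}(f(D'))$ one has $\Omega \sim \mathcal{N}(-\tfrac12\psi^2, \psi^2)$; the same standardization as in Lemma \ref{lemma_1} then gives $\mathbb{P}_{O \sim \mathcal{M}(f(D'))}(\Omega \geq \varepsilon) = \Phi(-\tfrac12\psi - \tfrac1\psi\varepsilon) = \Phi(-\tfrac{\Delta}{2\sigma} - \tfrac{\varepsilon\sigma}{\Delta})$. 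Substituting both into the expression for $\delta(\varepsilon)$ yields the stated identity; the forward implication follows by evaluating Definition \ref{eps_delta_dp_def} on the concrete set $\mathcal{S}^{\star}$, and the converse follows because $\mathcal{S}^{\star}$ supremises over all $\mathcal{S}$.

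I expect the main obstacle to be the rigorous justification that $\mathcal{S}^{\star}$ is the worst-case event — the optimality of the likelihood-ratio test — along with the accompanying measure-theoretic housekeeping: that the reduction to the worst-case adjacent pair is legitimate, and that the Gaussian law assigns zero mass to the boundary $\{\Omega = \varepsilon\}$ so that strict and non-strict inequalities coincide. Once these are settled, the remainder is just the standardization of a normal variable already performed for Lemma \ref{lemma_1}.
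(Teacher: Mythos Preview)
Your argument is correct and is essentially the derivation given in the cited reference \cite{balle2018improving}. Note, however, that the present paper does not supply its own proof of this corollary: it is stated as an imported fact (``We will use the following fact from \cite{balle2018improving}, which was foreshadowed in Lemma \ref{lemma_1}''), with the second equality obtained simply by substituting $\psi = \Delta/\sigma$. Your proposal therefore goes beyond what the paper does by actually reconstructing the proof: identifying the tight $\delta$ as the hockey-stick divergence, invoking the Neyman--Pearson optimality of the likelihood-ratio event $\{\Omega \geq \varepsilon\}$, and evaluating the two Gaussian tail probabilities via the distribution of $\Omega$ under each hypothesis. That is exactly the route taken in \cite{balle2018improving}, so there is nothing to compare against within this paper itself.
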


Therefore, $\varepsilon$ given $\psi_g$ and $\delta_g$ can be found by solving:
\begin{equation}
\varepsilon_{\psi_g, \delta_g} = \argmin_x \left\{\delta(x ,\psi_g)-\delta_g \right \}
\label{numeric_epsilon}
\end{equation}
which we will refer to as the \textit{privacy profile}-based conversion.

We selected the following conditions for our numerical experiments: $\delta=10^{-5}$ (a value often used in literature, e.g. \cite{abadi2016deep}), $\psi \in [0.1, 6]$, $\alpha \in [1,64]$. For solving equation (\ref{numeric_epsilon}), we used a numerical solver employing \textit{Brent's} method \cite{brent2013algorithms}. Expectedly, using the privacy profile resulted in the best conversion, including when $\psi \rightarrow 0$. Also the improved RDP conversion method expectedly resulted in a tighter conversion compared to the standard RDP conversion. Interestingly, the two techniques performed almost on par in the higher $\alpha$ regime. Neither RDP-based technique applies when $\delta=0$ due to the logarithm and a $\delta$ term in the denominator, and thus both experience an increase in conversion error near zero.\par
These results corroborate our claim that the single-parameter characterisation provided by our technique facilitates the interpretation of the GM's behaviour: The two RDP-based methods obviously exhibit $\alpha$-dependent behaviour. For values of $\psi > 3$, the privacy profile-based direct conversion was better approximated by RDP-based conversion using low $\alpha$-values. Nevertheless, the \textit{lossiness} of the conversion was more evident in this region. Conversely, for low values of $\psi$ around $1$, the RDP-based conversion performed better at high $\alpha$-values and also matched the privacy-profile based conversion almost perfectly, albeit only within a narrow interval. These results are visualised in Figure \ref{numerical_experiments}.\par
Our experiments support that, whereas RDP-based conversions are in principle able to very closely match the privacy profile-based conversion shown in equation (\ref{numeric_epsilon}), especially in the low-$\psi$-regime, they introduce an \textendash in our view \textendash undesirable additional consideration via the $\alpha$ parameter. We therefore recommend the privacy profile-based strategy for converting between $\psi$-based DP interpretations (or, equivalently, \textit{f-DP}) and $(\varepsilon, \delta(\varepsilon))$-DP.

\section{Conclusion}
We conclude our work by summarising our findings on the properties of the Gaussian Mechanism, viewed through the lens of the sensitivity index $\psi$:
\begin{theorem}
\upshape{Let $\mathcal{M}$ be an instance of the Gaussian Mechanism with sensitivity index $\psi$. Then, all of the following statements are equivalent:}
\begin{enumerate}[label=\textbullet, font=\upshape]
    \item $\mathcal{M}$ is $(\varepsilon, \delta(\varepsilon))$-DP with:
    \begin{equation}
        \delta(\varepsilon) \geq \Phi \left (\frac{1}{2}\psi - \frac{1}{\psi}\varepsilon \right) - e^{\varepsilon} \Phi \left (-\frac{1}{2}\psi - \frac{1}{\psi}\varepsilon \right)
    \end{equation}
    \item $\mathcal{M}$ is $\psi$-GDP
    \item $\mathcal{M}$ is $\left(\alpha, \frac{\alpha}{2} \, \psi^2 \right)$-RDP
    \item The ROC curve fully characterising the properties of $\mathcal{M}$ is given by:
    \begin{equation}
        R(x)_\mathcal{M} = \Phi \left( \psi + \Phi^{-1}(x) \right)
    \end{equation}
    \item The area under $R(x)_\mathcal{M}$ is given by:
    \begin{equation}
        \operatorname{AUC}_{R_\mathcal{M}} = \Phi \left( \frac{\psi}{\sqrt{2}} \right)
    \end{equation}
\end{enumerate}
\end{theorem}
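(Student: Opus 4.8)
The plan is to observe that the theorem asks for nothing genuinely new: each of the five statements has already been \emph{derived}, in the body of the paper, from the single hypothesis that $\mathcal{M}$ is a Gaussian mechanism with $\Delta/\sigma = \psi$. Writing $(\star)$ for that hypothesis, I would prove the theorem in ``hub-and-spoke'' form, establishing $(\star) \Leftrightarrow (i)$ for $i = 1,\dots,5$; transitivity through the hub then yields the mutual equivalence of all five statements.

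For the forward directions $(\star) \Rightarrow (i)$ I would simply invoke the relevant results: item~(1) is the Analytic Gaussian Mechanism statement (foreshadowed by Lemma~\ref{lemma_1} and re-derived en route to Lemma~\ref{roc_to_delta}); item~(2) is Lemma~\ref{psi_to_gdp} together with the remark that $\psi$ is the \emph{tightest} GDP parameter admissible for $\mathcal{M}$; item~(3) is Lemma~\ref{rdp_lemma}; item~(4) is equation~(\ref{roc_formula}); and item~(5) is equation~(\ref{auc_formula}). In each case one substitutes the worst-case separation $\Vert f(D) - f(D') \Vert_2 = \Delta$, so all five are consistently statements about the worst-case adjacent pair.

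For the converse directions $(i) \Rightarrow (\star)$ the key observation is injectivity: $\Phi$ is a strictly increasing bijection $\mathbb{R} \to (0,1)$, and the maps $\psi \mapsto \psi/\sqrt{2}$, $\psi \mapsto \psi + \Phi^{-1}(x)$ (for each fixed $x \in (0,1)$), and $\psi \mapsto \tfrac{\alpha}{2}\psi^2$ (on $\psi \ge 0$) are strictly monotone. Hence each of items~(2)--(5) determines $\psi$ uniquely, and for item~(1) the ``if and only if'' form of the Analytic Gaussian Mechanism lets one recover $\psi$ from the privacy profile $\delta(\cdot)$ (for instance from the value of $\varepsilon$ at which $\delta(\varepsilon) = \tfrac12$, or directly from strict monotonicity of $\delta(\varepsilon)$ in $\psi$). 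So each statement is equivalent to $(\star)$, and we are done.

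The hard part --- really the only step needing care rather than bookkeeping --- is to make precise the assertion that $\psi$ is a \emph{singular} parameter of the GM, i.e.\ that the worst-case privacy behaviour depends on $\Delta$ and $\sigma$ only through $\psi$. This rests on the rotational and translational invariance of the isotropic Gaussian $\mathcal{N}(0, \sigma^2 \mathbf{I}_d)$, which reduces the analysis to the one-dimensional problem along the segment joining $f(D)$ and $f(D')$, of length at most $\Delta$; once this reduction is stated cleanly, the remainder of the proof is the assembly of the already-established lemmas with the elementary monotonicity facts above.
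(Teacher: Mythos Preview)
Your proposal is correct and mirrors the paper's own treatment: the theorem appears in the Conclusion as a \emph{summary} of results already established in the body (Lemmas~\ref{lemma_1}, \ref{roc_to_delta}, \ref{psi_to_gdp}, \ref{rdp_lemma} and equations~(\ref{roc_formula}), (\ref{auc_formula}), together with the Analytic Gaussian Mechanism corollary), and the paper offers no separate proof beyond those derivations. Your converse arguments via injectivity, while valid, are strictly superfluous here: the theorem's hypothesis already fixes $\mathcal{M}$ as a GM with sensitivity index $\psi$, so each of (1)--(5) is simply \emph{true} under that hypothesis, and true statements are trivially equivalent.
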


The sensitivity index provides a parsimonious, yet comprehensive description of the Gaussian Mechanism's properties. We are hopeful that future work will expand upon and improve our formalism and propose similar analyses of other DP mechanisms. We contend that the broad deployment of next-generation privacy-enhancing technologies depends \textendash among other factors \textendash on how easy DP-based tooling is to deploy, explain and interpret, and are optimistic that our work represents a worthy contribution in this direction.

\bibliography{bibliography}
\end{document}